\documentclass[11pt]{article}

\usepackage[margin=1in]{geometry}
\usepackage{amsmath,amssymb,amsthm,mathtools}
\usepackage{aliascnt}
\usepackage{booktabs}
\usepackage{xcolor}
\usepackage{microtype}
\usepackage{enumitem}
\usepackage[numbers,sort&compress]{natbib}
\usepackage[colorlinks=true,linkcolor=blue!55!black,citecolor=blue!55!black,urlcolor=blue!55!black]{hyperref}
\hypersetup{
  pdftitle={Planning Against Learning in Rank-1 Games},
  pdfauthor={William Overman},
  pdfsubject={Strategic planning against learning dynamics in rank-one bimatrix games}
}
\usepackage[nameinlink,noabbrev]{cleveref}

\newtheorem{theorem}{Theorem}[section]

\newaliascnt{proposition}{theorem}
\newtheorem{proposition}[proposition]{Proposition}
\aliascntresetthe{proposition}

\newaliascnt{lemma}{theorem}
\newtheorem{lemma}[lemma]{Lemma}
\aliascntresetthe{lemma}

\newaliascnt{corollary}{theorem}
\newtheorem{corollary}[corollary]{Corollary}
\aliascntresetthe{corollary}

\newaliascnt{definition}{theorem}
\newtheorem{definition}[definition]{Definition}
\aliascntresetthe{definition}

\newaliascnt{remark}{theorem}

\aliascntresetthe{remark}

\newaliascnt{example}{theorem}
\newtheorem{example}[example]{Example}
\aliascntresetthe{example}

\crefname{theorem}{theorem}{theorems}
\crefname{proposition}{proposition}{propositions}
\crefname{lemma}{lemma}{lemmas}
\crefname{corollary}{corollary}{corollaries}
\crefname{definition}{definition}{definitions}
\crefname{remark}{remark}{remarks}
\crefname{example}{example}{examples}

\newcommand{\R}{\mathbb{R}}

\newcommand{\one}{\mathbf{1}}
\newcommand{\e}{\mathrm{e}}
\newcommand{\diag}{\operatorname{Diag}}
\newcommand{\softmax}{\operatorname{softmax}}
\newcommand{\smx}[1]{\softmax_{\eta}\!\left(#1\right)}
\newcommand{\rank}{\operatorname{rank}}

\newcommand{\poly}{\operatorname{poly}}

\newcommand{\Rcont}{R_{\mathrm{cont}}}
\newcommand{\Rstar}{R^*_{\mathrm{cont}}}
\newcommand{\KL}{D_{\mathrm{KL}}}
\newcommand{\dd}{\,\mathrm{d}}

\title{Planning Against Learning in Rank-1 Games}
\author{William Overman\thanks{The main reduction and portions of the writing of this paper were developed through discussion with and assistance from OpenAI’s ChatGPT (GPT-5.6 Sol). The author directed the research, verified the proof, and takes responsibility for the paper’s contents.}\\
Graduate School of Business, Stanford University\\
\texttt{wpo@stanford.edu}}
\date{}

\begin{document}
\maketitle

\begin{abstract}
Learning algorithms are often used to make decisions in repeated multi-agent environments. When another player understands how a learner adapts from past experience, that player can plan strategically across rounds to influence the learner's future behavior. Recent work shows that optimizing against Replicator Dynamics, the continuous-time analogue of Multiplicative Weights Update, is tractable in zero-sum games but can be hard in unrestricted general-sum games. We study the first structured class beyond zero sum: bimatrix games satisfying $\rank(A+B)=1$, for which Nash equilibria can be computed in polynomial time.

Our main result shows that this equilibrium tractability does not extend to planning against learning dynamics. Unless $\mathsf{P}=\mathsf{NP}$, approximating the optimizer's optimal continuous-time reward within a fixed additive constant is NP-hard even when $\rank(A+B)=1$, the learner starts from the uniform state, and the optimizer is restricted to constant strategies. The hardness persists for bounded payoff matrices and polynomially bounded horizons. We complement this result with structural characterizations of several tractable special cases. Thus rank-one games already separate efficient equilibrium computation from strategic planning against a learning opponent.
\end{abstract}

\section{Introduction}

Online learning algorithms increasingly interact with strategic agents that understand how the learner updates. In such settings, an optimizer may sacrifice immediate utility in order to steer the learner toward future behavior that is more favorable to the optimizer. This optimizer-versus-learner viewpoint has been studied in auctions, contracts, persuasion, and repeated normal-form games; see, for example, \citet{deng2019strategizing} and the discussion in \citet{assos2024maximizing}.

\citet{assos2024maximizing} formalized a basic repeated bimatrix model with two players. The row player, called the \emph{optimizer}, knows the column player's payoff matrix and learning algorithm and chooses a time-dependent strategy to maximize cumulative utility. The column player, called the \emph{learner}, follows Replicator Dynamics in continuous time, or Multiplicative Weights Update in discrete time. In zero-sum games $(A,-A)$, they showed that the optimizer's continuous-time reward depends only on its aggregate play, so an optimal strategy is constant and can be computed by convex optimization. For general-sum games, they established hardness against a history-best-response learner; subsequent work proved an $\Omega(T)$ hardness gap against MWU in unrestricted games \citep{assos2025intractability}.

The continuous-time general-sum problem against Replicator Dynamics remained open. In particular, \citet{assos2024maximizing} suggested studying the first nonzero level of the standard rank hierarchy,
\[
    \rank(A+B)=1.
\]
This class is notable because exact Nash equilibrium computation is polynomial-time solvable \citep{adsul2021fast}, and tailored learning dynamics can efficiently reach approximate Nash equilibria \citep{patris2024learning}. The question is whether this structural tractability survives when one player optimally plans against the other's learning dynamics.

We show that the apparent tractability of rank-one games does not extend to strategic planning against a learner. Optimizing against Replicator Dynamics remains NP-hard even when $\rank(A+B)=1$, including when the optimizer is restricted to constant strategies. Nevertheless, the rank-one structure exposes several tractable subclasses: two-action learners admit a strong ordering characterization, while conservative instances reduce to static optimization.

\subsection{Our contributions}

Our main result is computational hardness.

\begin{theorem}[Informal]
Unless $\mathsf{P}=\mathsf{NP}$, no polynomial-time algorithm can approximate the optimal continuous-time value within additive error $1/10$ on every rational game satisfying $\rank(A+B)=1$. The result remains true if the optimizer is restricted to constant strategies. It also remains true for matrices $A,B\in[0,1]^{n\times m}$ after allowing a polynomially bounded horizon.
\end{theorem}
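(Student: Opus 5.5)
We reduce from an NP-hard combinatorial problem whose structure matches the closed form of Replicator Dynamics under a constant optimizer strategy. The plan is to exploit the fact that when the optimizer plays a constant mixed strategy $x$ on $[0,T]$, the learner's state has the explicit form
\[
  y_j(t)=\frac{\e^{t(B^\top x)_j}}{\sum_k \e^{t(B^\top x)_k}},
\]
so the optimizer's reward is $\int_0^T x^\top A\,y(t)\dd t$. Write $A+B=uv^\top$. Then $A=uv^\top-B$, and
\[
  x^\top A y = (x^\top u)(v^\top y) - x^\top B y .
\]
The second term is (minus) the learner's payoff. This splits the optimizer's reward into a rank-one coupling term and a zero-sum-like term. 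The zero-sum-like term integrates to a log-partition (potential) expression:
\[
  \int_0^T x^\top B y(t)\dd t=\log\sum_k \e^{T(B^\top x)_k}-\log m .
\]
We then choose $u,v$ so that the first term rewards steering the learner toward a designated set of columns.

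\textbf{Gadget.} Encode an instance of a problem such as \textsc{Max-Clique} or \textsc{3-Sat}, say via the Motzkin--Straus quadratic program. Take optimizer rows indexed by vertices and learner columns indexed by vertices plus auxiliary ``edge/penalty'' columns. The aim is that, for large effective horizon, the learner's state concentrates on $\arg\max_j (B^\top x)_j$. With $B$ chosen so that $(B^\top x)_j$ are linear forms in $x$, the long-run payoff becomes $\min$/$\max$ of linear forms, and the log-partition term becomes $T\max_j (B^\top x)_j$ up to $O(\log m)$. The relevant quantity is therefore roughly
\[
  T\bigl[(x^\top u)\,v_{j^*(x)}-\max_j(B^\top x)_j\bigr].
\]
We need this to encode a quadratic objective like $x^\top(I+\mathrm{Adj})x$, whose maximization over the simplex is NP-hard to approximate within constants after scaling (Motzkin--Straus gives value $1-1/\omega$). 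The product $(x^\top u)(v^\top y)$ is genuinely bilinear in $x$ and $y(x)$, and $y$ depends on $x$ through softmax. Choosing learner columns $j$ with $B_{\cdot j}$ equal to $x$-linear forms $\ell_j(x)$ then yields the terms $(x^\top u)\,v_j$ selected by a max. We would design the columns so that the selected column encodes a quadratic via the identity $\max_j \ell_j(x) \approx$ a piecewise-linear approximation.

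\textbf{Main obstacle.} Making the objective genuinely nonconvex (hard) using only the single bilinear factor, since a max of linear functions times a linear function is the only source of hardness. I would first try to get hardness from nonconvexity over exponentially many pieces. Each subset $S$ of rows would give a candidate ``target'' column behavior, but only polynomially many columns are allowed. The more promising route is therefore a transient-based encoding: choose $T$ moderate so the reward depends on the softmax shape, not just its argmax, and use $\sum_j \e^{t\ell_j(x)}$ to emulate a product over clauses. After the gadget is fixed, we would finish in three steps:
\begin{enumerate}[label=(\roman*)]
  \item Show completeness, i.e.\ a satisfying or clique strategy $x$ achieves value $\ge c$.
  \item Show soundness: every strategy, including non-constant ones, achieves at most $c-1/5$. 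This uses an upper bound comparing any time-varying strategy to its constant best, e.g.\ via the potential identity above plus a bound on the coupling term.
  \item Rescale $A,B$ into $[0,1]$ by affine maps. Adding constants to $B$ columns doesn't change the dynamics, and we absorb the scale into the horizon $T$, which stays polynomial since the gaps are $1/\poly$ before rescaling.
\end{enumerate}
Rank one of $A+B$ must be preserved by these shifts; we would shift $A$ and $B$ by $\alpha\one\one^\top$ and $-\alpha\one\one^\top$, which leaves $A+B$ unchanged.
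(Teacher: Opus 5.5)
\textbf{There is a genuine gap: the proposal never constructs a gadget, and the proof depends entirely on one.} You correctly isolate the decomposition $x^\top Ay=(u^\top x)(v^\top y)-x^\top By$ and the log-partition form of the second term, which is also the paper's starting point.

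The route you sketch first cannot supply hardness. When $y$ concentrates on $\arg\max_j(B^\top x)_j$, the best constant strategy earns, to leading order, $T$ times the leader's optimal-commitment (Stackelberg) payoff against a best-responding follower. That payoff is computable by solving one linear program per learner action. Any hardness would therefore have to sit in correction terms your plan does not control.

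Your sample objective is also degenerate. Since $I+W_G\le\one\one^\top$ entrywise, $\max_{x\in\Delta_n}x^\top(I+W_G)x=1$ for every graph, attained at a vertex. Motzkin--Straus needs $x^\top W_Gx$, whose maximum is $1-1/\omega(G)$.

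The missing idea is the \emph{opposite}, small-logit regime. Take $u=\one$ and $A=\one v^\top-B$. Give each edge $\{p,q\}$ four learner columns: $\pm\rho(e_p+e_q)$ with $v$-entry $+\Lambda$, and $\pm\rho(e_p-e_q)$ with $v$-entry $-\Lambda$. The symmetric pairs cancel every linear term, and one gets exactly $v^\top\softmax_1(B^\top z)=\Lambda\sum_E\sinh(\rho z_p)\sinh(\rho z_q)/\sum_E\cosh(\rho z_p)\cosh(\rho z_q)$. With $\rho=1/(100n)$, $K=n^2$ and $\Lambda=2MK/\rho^2$, this is within $3K\rho^2$ of $Kq_G(z)$. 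Placing the huge factor $\Lambda$ in $v$ alone amplifies the Motzkin--Straus gap $1/(k(k-1))$ to the constant $n^2/(3k(k-1))\ge1/3$. Meanwhile the endpoint term, which depends only on $B$, stays in $[-2\rho^2,0]$.

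\textbf{Your steps (ii) and (iii) also fail as written.} For (ii), there is no generic way to compare a time-varying control with a constant one in rank-one games. In the train-then-harvest example, one switch beats every constant strategy by an amount linear in $T$. The paper's soundness argument is specific to the gadget. With $u=\one$, the coupling term is $\int_0^1 g(z(t))\dd t\approx K\int_0^1q_G(z(t))\dd t$. Degree-two homogeneity with $z(t)/t\in\Delta_n$ gives $q_G(z(t))\le t^2q_G^*$, hence the bound $Kq_G^*/3$, which a constant strategy attains.

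For (iii), the shift $(A+\alpha\one\one^\top,B-\alpha\one\one^\top)$ keeps $A+B$ fixed. It therefore cannot make both matrices nonnegative when $A+B$ has a negative entry, as it does in the paper's construction ($v_j=-\Lambda$ on half the columns). Also, adding different constants to different columns of $B$ changes the relative logits; only a common or row-wise shift leaves the softmax invariant. The paper instead uses $u=\one$ to shift $v$ by $(\Lambda+2\rho)\one$ and $B$ by $\rho\one\one^\top$. This keeps rank one and changes the value by the known constant $\Lambda+\rho$. It then divides by $C=2\Lambda+2$ and stretches the horizon to $C$. That time change preserves the value exactly, so the constant gap must already be present before normalization. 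A $1/\poly(n)$ gap, as your step (iii) suggests, would remain $1/\poly(n)$ and would not give additive-$1/10$ hardness.
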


The reduction is from \textsc{Clique}. For each graph edge, we create four learner actions whose signed logits synthesize a smooth approximation to the Motzkin--Straus quadratic program. The optimizer's best continuous-time reward then reveals the graph's clique number up to a constant additive error. Notably, the reduction does not rely on complicated temporal switching: hardness already holds for choosing the best constant mixed strategy.

We complement this negative result with structural results that explain what survives from the zero-sum case. When $A+B=uv^\top$, the optimizer's reward separates into the familiar zero-sum endpoint term and one additional scalar term that can depend on the order of play. We characterize instances in which this additional term is path-independent, so constant play remains optimal. When the learner has two actions, optimizer actions admit a global block ordering under a simple sign condition, and every $2\times2$ instance has an optimal strategy with at most one switch. A minimal \emph{train-then-harvest} game shows that this dynamic advantage is real: one switch can strictly outperform every constant strategy, with an exact closed-form optimum.

\subsection{Related work}

\paragraph{Strategizing against learning agents.}
A modern line of work studies a strategic optimizer that knows, and plans against, another player's online learning rule. \citet{braverman2018selling} initiated this perspective in repeated auctions, showing that a seller can exploit a mean-based buyer to obtain revenue close to welfare. \citet{deng2019strategizing} abstracted the interaction to general normal-form games: an optimizer can obtain an $\Omega(T)$ advantage over the Stackelberg benchmark against mean-based learners, whereas swap-regret substantially limits this advantage. \citet{mansour2022bayesian} developed the corresponding theory in Bayesian games, and related optimizer--learner interactions have since been studied with multiple buyers \citep{cai2023multiple}, learning agents in contract design \citep{guruganesh2024contracting}, and the reward--regret tradeoffs faced by the learner itself \citep{brown2023learning}. For particular dynamics, \citet{guo2023hedge} solve the finite-horizon optimization problem against Hedge in $2\times2$ zero-sum games.

Our closest antecedent is \citet{assos2024maximizing}. They formulate the continuous-time optimizer value against Replicator Dynamics, prove that zero-sum instances reduce to endpoint convex optimization, compare the continuous and discrete dynamics, and establish a first general-sum hardness result against a history-best-response learner. Their conclusion explicitly proposes $\rank(A+B)=1$ as the first structured general-sum case in which to understand the optimizer's value and policy. \citet{assos2025intractability} subsequently strengthen the unrestricted hardness picture to an $\Omega(T)$ gap against MWU. We instead resolve the proposed rank-one continuous-time direction: the restriction yields substantial geometric structure, but does not prevent NP-hardness when the learner's action set is unrestricted.

\paragraph{The rank hierarchy for bimatrix games.}
The rank used here is $\rank(A+B)$, introduced by \citet{kannan2010games} as a hierarchy interpolating between zero-sum and general bimatrix games. This differs from requiring the individual payoff matrices $A$ and $B$ to have low rank, as in earlier small-support results \citep{lipton2003playing}. \citet{kannan2010games} gave polynomial-time approximation algorithms for every fixed rank and showed that rank-one games can already have arbitrarily many connected components of equilibria. \citet{theobald2009enumerating} showed that not every rank-one equilibrium lies on a Lemke--Howson path and gave a parametric algorithm for enumerating the equilibria of nondegenerate rank-one games. Rank-one games may in fact have exponentially many Nash equilibria \citep{vonstengel2012exponential}.

The main positive equilibrium result exploits the representation $A+B=ab^\top$. An equilibrium of $(A,-A+ab^\top)$ can be located through the parameterized zero-sum family
\[
    \bigl(A-\one\lambda b^\top,\,-A+\one\lambda b^\top\bigr)
\]
together with the scalar consistency condition $x^\top a=\lambda$. \citet{adsul2011homeomorphism} used the geometry of the fully labeled set to obtain the first polynomial-time exact algorithm, and \citet{adsul2021fast} developed the equilibrium-path and parametric-linear-programming structure into faster algorithms that also cover degenerate games. Related rank-based methods extend beyond simplex strategy spaces to bilinear games \citep{garg2011bilinear}. By contrast, exact equilibrium computation is PPAD-hard for every fixed rank at least three, while the rank-two case remains unresolved \citep{mehta2018constant}.

The equilibrium-learning literature gives another positive side of the hierarchy. Broader work studies last-iterate and optimistic learning beyond zero-sum games \citep{anagnostides2022lastiterate,anagnostides2022omd}. Building directly on the parameterized zero-sum representation and the equilibrium landscape above, \citet{patris2024learning} modify optimistic mirror descent to reach an $\epsilon$-approximate Nash equilibrium in $O(\epsilon^{-2}\log(1/\epsilon))$ iterations, despite rank-one games not generally satisfying the Minty variational-inequality condition. The formal resemblance to our decomposition is useful but limited: those works vary a scalar parameter to satisfy an equilibrium consistency condition, whereas our optimizer endogenously moves the learner's state and maximizes a path-dependent finite-horizon functional. Thus, their tractability results do not directly control the transient optimization problem studied here, and our hardness theorem shows that equilibrium tractability does not transfer to strategic planning against the learning dynamics.

\paragraph{Motivational connections to AI safety.}
Game-theoretic models have long been used to study alignment and oversight. The Off-Switch Game and cooperative inverse reinforcement learning model human--AI interaction as cooperative games in which uncertainty or asymmetric information about reward plays a central role \citep{hadfieldmenell2016offswitch,hadfieldmenell2016cirl}. More recent work considers broader oversight interactions that need not be fully cooperative, including Markov-game formulations of autonomy, deferral, trust, and intervention \citep{overman2025oversight}. Learned optimization, reward tampering, and AI control provide complementary motivations for studying strategic systems that may anticipate how feedback or oversight adapts over time \citep{hubinger2019risks,everitt2021reward,greenblatt2024control}. These connections motivate optimizer--learner interactions, but none of them implies the rank-one structure studied here.

Our hardness proof uses the classical Motzkin--Straus characterization of maximum clique as quadratic optimization over a simplex \citep{motzkin1965maxima}. The signed four-column hyperbolic edge gadget is tailored to the rank-one identity $A+B=uv^\top$.

\section{Model and preliminaries}

For $d\in\mathbb{N}$, let
\[
    \Delta_d=\left\{x\in\R^d_+:\one^\top x=1\right\}
\]
denote the probability simplex. The optimizer has $n$ pure actions and payoff matrix $A\in\R^{n\times m}$; the learner has $m$ pure actions and payoff matrix $B\in\R^{n\times m}$. We write $x(t)\in\Delta_n$ and $y(t)\in\Delta_m$ for their mixed strategies.

Fix a horizon $T>0$, learning rate $\eta>0$, and initial historical-payoff vector $h^0\in\R^m$. An admissible optimizer strategy is a measurable map
\[
    x:[0,T]\to\Delta_n.
\]
Its cumulative play and the learner's historical-payoff state are
\begin{equation}
    z_x(t)=\int_0^t x(s)\dd s,
    \qquad
    h_x(t)=h^0+B^\top z_x(t).
    \label{eq:state}
\end{equation}
The learner follows Replicator Dynamics in its exponential-weights representation:
\begin{equation}
    y_{x,j}(t)
    =\frac{\exp(\eta h_{x,j}(t))}
           {\sum_{k=1}^m\exp(\eta h_{x,k}(t))},
    \qquad j\in[m].
    \label{eq:rd}
\end{equation}
This is equivalent to the usual replicator ODE whenever the learner starts in the interior of the simplex. We write $y_x(t)=\softmax_{\eta}(h_x(t))$ and use this notation uniformly below.

The optimizer's cumulative reward is
\begin{equation}
    \Rcont(x,h^0,T,A,B)
    =\int_0^T x(t)^\top A y_x(t)\dd t,
    \label{eq:rcont}
\end{equation}
and the optimal value is
\begin{equation}
    \Rstar(h^0,T,A,B)
    =\sup_{x:[0,T]\to\Delta_n}\Rcont(x,h^0,T,A,B).
    \label{eq:rstar}
\end{equation}
We use the supremum convention throughout; the controls constructed in our positive results and hardness reduction attain the corresponding bounds.

\begin{definition}[Rank]
The rank of a bimatrix game $(A,B)$ is $\rank(A+B)$. A rank-one game admits a factorization
\[
    A+B=uv^\top
\]
for vectors $u\in\R^n$ and $v\in\R^m$.
\end{definition}

We use the scaled log-partition function
\begin{equation}
    \Phi_\eta(h)=\frac1\eta\log\left(\sum_{j=1}^m\e^{\eta h_j}\right).
    \label{eq:phi}
\end{equation}
Its gradient is the softmax map:
\[
    \nabla\Phi_\eta(h)=\smx{h}.
\]

\section{The rank-one reduction}
\label{sec:rankone-reduction}

\begin{proposition}[Rank-one decomposition]
\label{prop:decomposition}
Suppose $A+B=uv^\top$. For every admissible optimizer strategy $x$,
\begin{equation}
\boxed{
    \Rcont(x,h^0,T,A,B)
    =\Phi_\eta(h^0)-\Phi_\eta(h_x(T))
     +\int_0^T (u^\top x(t))(v^\top y_x(t))\dd t.}
    \label{eq:rankone-decomp}
\end{equation}
\end{proposition}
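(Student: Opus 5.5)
The plan is to write $A = uv^\top - B$, split the instantaneous reward accordingly, and recognize the $B$-part as an exact time derivative of the log-partition potential along the trajectory. For every $t$ we have
\[
    x(t)^\top A y_x(t) = (u^\top x(t))(v^\top y_x(t)) - x(t)^\top B y_x(t).
\]
Integrating over $[0,T]$, the first term produces the rank-one correction integral in \eqref{eq:rankone-decomp}. It then remains to show that
\[
    \int_0^T x(t)^\top B y_x(t)\dd t = \Phi_\eta(h_x(T)) - \Phi_\eta(h^0).
\]

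For this step, note from \eqref{eq:state} that $z_x$ is Lipschitz (its derivative $x(t)$ lies in the simplex), hence absolutely continuous. Therefore $h_x(t)=h^0+B^\top z_x(t)$ is absolutely continuous with $\dot h_x(t)=B^\top x(t)$ for almost every $t$. Since $\Phi_\eta$ is $C^1$ (indeed smooth) on $\R^m$, the composition $t\mapsto\Phi_\eta(h_x(t))$ is absolutely continuous. The chain rule holds almost everywhere:
\[
    \frac{\mathrm d}{\mathrm dt}\Phi_\eta(h_x(t)) = \nabla\Phi_\eta(h_x(t))^\top \dot h_x(t) = y_x(t)^\top B^\top x(t) = x(t)^\top B y_x(t),
\]
using $\nabla\Phi_\eta=\softmax_\eta$ and \eqref{eq:rd}. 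The fundamental theorem of calculus for absolutely continuous functions then gives the identity above. Substituting it into the integrated split yields \eqref{eq:rankone-decomp}.

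The only step requiring any care is this measure-theoretic justification: $x$ is merely measurable, so $h_x$ need not be differentiable everywhere. I would handle it with the Lipschitz/absolute-continuity observation. As a backup, one could argue for piecewise-constant $x$ first and pass to limits by dominated convergence, since all integrands are bounded on $[0,T]$. Integrability of the correction term is immediate, since $u^\top x$ and $v^\top y_x$ are bounded and measurable.
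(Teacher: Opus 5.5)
Your proposal is correct and follows the same route as the paper: split $A=uv^\top-B$, identify $x(t)^\top B\,y_x(t)$ as $\frac{\mathrm d}{\mathrm dt}\Phi_\eta(h_x(t))$ via $\nabla\Phi_\eta=\softmax_\eta$, and integrate. Your absolute-continuity argument for the chain rule and the fundamental theorem of calculus makes rigorous the step the paper states only as ``almost everywhere.''
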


\begin{proof}
Since $A=-B+uv^\top$,
\[
    x^\top Ay=-x^\top By+(u^\top x)(v^\top y).
\]
By \cref{eq:state}, $\dot h_x(t)=B^\top x(t)$ almost everywhere. Hence
\[
    \frac{\dd}{\dd t}\Phi_\eta(h_x(t))
    =\nabla\Phi_\eta(h_x(t))^\top B^\top x(t)
    =x(t)^\top B y_x(t).
\]
Integrating gives \cref{eq:rankone-decomp}.
\end{proof}

Let
\begin{equation}
    \mathcal{D}_T
    =\left\{z\in\R^n_+:\one^\top z\le T\right\},
\end{equation}
and define
\begin{equation}
    y(z)=\smx{h^0+B^\top z},
    \qquad
    g(z)=v^\top y(z).
    \label{eq:gdef}
\end{equation}
The cumulative-play path $z_x(t)=\int_0^t x(s)\dd s$ is coordinatewise nondecreasing and satisfies $\one^\top z_x(t)=t$. Equation \eqref{eq:rankone-decomp} can therefore be written as
\begin{equation}
    \Rcont(x,h^0,T,A,B)
    =\Phi_\eta(h^0)-\Phi_\eta(h^0+B^\top z_x(T))
     +\int_{\gamma_x} g(z)u^\top\dd z,
    \label{eq:path-form}
\end{equation}
where $\gamma_x$ is the path traced by $z_x$. Equivalently, the additional term is the line integral of
\begin{equation}
    \omega=g(z)u^\top\dd z.
    \label{eq:oneform}
\end{equation}

In a zero-sum game, the final term is absent and reward depends only on aggregate play. Rank one adds a single scalar term through which two schedules with the same aggregate strategy can earn different rewards. This is the only source of a temporal advantage in the rank-one problem.

\paragraph{Tractable special cases.}
The appendix develops the geometry of this additional term. \Cref{app:curvature} derives an adjacent-swap identity that measures the value of reordering two optimizer actions. \Cref{app:pathind} characterizes the instances in which the term is path-independent; in those cases, constant play remains optimal and the problem reduces to endpoint optimization.

\section{A minimal dynamic advantage}
\label{sec:minimal-dynamic}

The following $2\times2$ example shows that the zero-sum conclusion that constant play is optimal already fails at rank one.

\begin{example}[Train-then-harvest]
\label{ex:train-harvest}
Let
\begin{equation}
    A=\begin{pmatrix}0&0\\0&1\end{pmatrix},
    \qquad
    B=\begin{pmatrix}0&1\\0&0\end{pmatrix}.
    \label{eq:train-matrices}
\end{equation}
Then
\[
    A+B=\one\begin{pmatrix}0&1\end{pmatrix}
\]
has rank one. Row one provides no immediate reward but shifts the learner toward column two; row two then earns reward precisely when column two is played. Starting from $h^0=0$, if the optimizer spends $\tau$ time units training on row one and then switches to row two, its reward is
\begin{equation}
    R(\tau)=(T-\tau)\sigma(\eta\tau),
    \qquad
    \sigma(s)=\frac{1}{1+\e^{-s}}.
    \label{eq:train-reward}
\end{equation}
For any fixed total training time $\tau$, placing all training before harvesting is optimal; the proof is given in \cref{app:train-harvest}.
\end{example}

The principal Lambert $W$ function is characterized by $W(x)\e^{W(x)}=x$ for $x\ge0$.

\begin{proposition}[Closed-form optimum]
\label{prop:lambert}
For the game in \cref{ex:train-harvest},
\begin{equation}
    \tau^*=0,
    \qquad
    \Rstar(0,T,A,B)=\frac T2,
    \qquad\text{if }\eta T\le2.
\end{equation}
If $\eta T>2$, the unique optimal switch time and value are
\begin{equation}
\boxed{
    \tau^*
    =T-\frac{1+W(\e^{\eta T-1})}{\eta},
    \qquad
    \Rstar(0,T,A,B)
    =\frac1\eta W(\e^{\eta T-1}).}
    \label{eq:lambert}
\end{equation}
\end{proposition}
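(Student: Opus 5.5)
The plan is to reduce the optimization over all admissible strategies to the one-parameter family of train-then-harvest schedules, and then solve the resulting scalar problem in closed form using log-concavity and the Lambert $W$ function.

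\textbf{Step 1: reduction to one switch.} Fix an admissible $x$ and let $\tau=\int_0^T x_1(t)\dd t$ be its total training time. With $h^0=0$ we have $h_x(t)=(0,z_{x,1}(t))$, so $y_{x,2}(t)=\sigma(\eta z_{x,1}(t))$. The reward is therefore $\int_0^T x_2(t)\,\sigma(\eta z_{x,1}(t))\dd t$. Since $z_{x,1}(t)\le\tau$ and $\sigma$ is increasing, this is at most $\sigma(\eta\tau)\int_0^T x_2\dd t=(T-\tau)\sigma(\eta\tau)=R(\tau)$. This bound is exactly what the train-then-harvest schedule attains, as asserted in \cref{ex:train-harvest} and proved in \cref{app:train-harvest}. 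Hence $\Rstar(0,T,A,B)=\max_{\tau\in[0,T]}R(\tau)$, and the supremum is attained.

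\textbf{Step 2: strict log-concavity.} On $[0,T)$ write
\[
\log R(\tau)=\log(T-\tau)+\log\sigma(\eta\tau).
\]
Both terms are concave, and the first is strictly concave. Because $R(T)=0$, any maximizer lies in $[0,T)$, and it is unique. The derivative is
\[
\frac{R'}{R}=-\frac{1}{T-\tau}+\eta\bigl(1-\sigma(\eta\tau)\bigr).
\]
At $\tau=0$ this equals $-1/T+\eta/2$, which is $\le0$ exactly when $\eta T\le2$. In that case concavity gives $\tau^*=0$ and $R(0)=T/2$.

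\textbf{Step 3: solving the first-order condition when $\eta T>2$.} The interior maximizer satisfies $\eta(T-\tau)\bigl(1-\sigma(\eta\tau)\bigr)=1$. Since $1-\sigma(s)=1/(1+\e^{s})$, this is equivalent to
\[
\eta(T-\tau)=1+\e^{\eta\tau}.
\]
Set $w=\e^{\eta\tau}$, so that $\eta(T-\tau)=1+w$ and hence $\eta\tau=\eta T-1-w$. Then $w=\e^{\eta T-1-w}$, i.e.\ $w\e^{w}=\e^{\eta T-1}$, giving $w=W(\e^{\eta T-1})$. This yields $\tau^*=T-(1+w)/\eta$.

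The value follows from $\sigma(\eta\tau^*)=w/(1+w)$:
\[
R(\tau^*)=\frac{1+w}{\eta}\cdot\frac{w}{1+w}=\frac{w}{\eta}.
\]
It remains to check that $\tau^*>0$. This is equivalent to $w>1$, i.e.\ $w\e^w>\e$, i.e.\ $\eta T>2$, which holds in this case. It is consistent with Step 2, where $R'(0)>0$ exactly when $\eta T>2$.

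\textbf{Main obstacle.} The only nonroutine point is Step 1: justifying that restricting to single-switch schedules loses nothing. The monotonicity bound $\sigma(\eta z_{x,1}(t))\le\sigma(\eta\tau)$ handles this cleanly. Uniqueness of the optimal switch time then follows from strict log-concavity in Step 2.
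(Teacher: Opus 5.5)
Your proof is correct and follows essentially the paper's route: reduce to train-then-harvest schedules, then solve the scalar first-order condition $\eta(T-\tau)=1+\e^{\eta\tau}$ via the Lambert $W$ function. Two small differences: your pointwise bound $\sigma(\eta z_{x,1}(t))\le\sigma(\eta\tau)$ in Step 1 is a slightly more direct version of the paper's $q(t)\le\min\{t,\tau\}$ argument, and your strict log-concavity of $R$ plays the role of the paper's observation that the bracket in $R'(\tau)$ is strictly decreasing.
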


The advantage can be linear in the horizon under the natural scaling $\eta=\Theta(1/T)$.

\begin{proposition}[Linear separation from constant play]
\label{prop:linear-gap}
For the game in \cref{ex:train-harvest}, fix $L\in(2,4]$ and set $\eta=L/T$. Then
\begin{equation}
    \max_{p\in[0,1]}\Rcont(x(t)\equiv(p,1-p),0,T,A,B)=\frac T2,
\end{equation}
while
\begin{equation}
\boxed{
    \Rstar(0,T,A,B)-\max_{\text{constant }x}\Rcont(x,0,T,A,B)
    =\left(\frac{W(\e^{L-1})}{L}-\frac12\right)T>0.}
    \label{eq:linear-gap}
\end{equation}
\end{proposition}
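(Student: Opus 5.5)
First I compute the reward of every constant strategy and show that it is at most $T/2$. After that, I subtract this from the optimal value in \cref{prop:lambert}. Take $x(t)\equiv(p,1-p)$ with $h^0=0$. Since $B^\top z=(0,\,z_1)^\top$, the learner's logits are $h(t)=(0,\,pt)$. Hence $y_2(t)=\sigma(\eta p t)$. The instantaneous reward is $x^\top A y=(1-p)\,y_2$, so
\[
    \Rcont(p)=(1-p)\int_0^T\sigma(\eta p t)\dd t .
\]
At $p=0$ this equals $T/2$. For $p>0$ I substitute $s=\eta p t$, which gives
\[
    \Rcont(p)=\frac{1-p}{\eta p}\int_0^{\eta p T}\sigma(s)\dd s=\frac{1-p}{\eta p}\log\frac{1+\e^{\eta pT}}{2}.
\]
I want $\Rcont(p)\le T/2$ for all $p\in(0,1]$ and all $\eta T=L\le 4$. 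Write $a=\eta pT=Lp$. The inequality becomes
\[
    (1-p)\log\frac{1+\e^{a}}{2}\le \frac{a}{2}.
\]
I plan to use the bound $\log\frac{1+\e^a}{2}\le \frac a2+\frac{a^2}{8}$. It holds because $\log\cosh(a/2)\le a^2/8$, which follows from comparing the power series of $\cosh u$ and $\e^{u^2/2}$. Since $\log\frac{1+\e^a}{2}=\frac a2+\log\cosh(a/2)$, it then suffices to prove
\[
    (1-p)\Bigl(\frac a2+\frac{a^2}{8}\Bigr)\le \frac a2 .
\]
This is equivalent to $(1-p)\frac{a}{8}\le \frac p2$. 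Substituting $a=Lp$, the condition reads $(1-p)L\le 4$, and this holds because $L\le 4$. So the maximum over constant strategies is exactly $T/2$, and it is attained at $p=0$. This is where the restriction $L\le 4$ enters. This inequality is the step I expect to need the most care. If the $\cosh$ bound were too lossy, my fallback would be to show directly that $\psi(p)=a/2-(1-p)\log\frac{1+\e^a}{2}$ satisfies $\psi(0)=0$ and is increasing, using $\sigma(a)\le \frac12+\frac a4$ for $a\ge0$.

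Next I invoke \cref{prop:lambert}. Since $\eta T=L>2$, it gives $\Rstar=\frac1\eta W(\e^{L-1})=\frac{T}{L}W(\e^{L-1})$. Subtracting $T/2$ yields the boxed identity \eqref{eq:linear-gap}.

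It remains to prove strict positivity, that is, $W(\e^{L-1})>L/2$ for $L>2$. The function $w\mapsto w\e^w$ is increasing, so this is equivalent to $\frac L2\e^{L/2}<\e^{L-1}$. Taking logarithms, this is $\log(L/2)<L/2-1$, which is the strict inequality $\log s<s-1$ at $s=L/2>1$. Alternatively, positivity follows directly from the uniqueness statement in \cref{prop:lambert}: the optimal switch time $\tau^*$ is strictly positive and differs from the constant strategy $p=0$, which earns $R(0)=T/2$, and $R$ is strictly maximized at $\tau^*$.
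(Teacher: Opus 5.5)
Your proposal is correct and follows the paper's proof essentially step for step. Both compute the same closed form for the constant-strategy reward and apply the same bound $\log\cosh(s/2)\le s^2/8$, which reduces the claim to $(1-p)L\le 4$, and both then invoke \cref{prop:lambert}; your check that $W(\e^{L-1})>L/2$ is equivalent to $\log(L/2)<L/2-1$ spells out the strict positivity that the paper attributes to the interior optimum.
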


The example is part of a broader two-action theory. \Cref{app:two-actions} shows that pairwise ordering preferences are state-independent when the learner has two actions. Under a common-sign condition on the rank-one factor $u$, this yields a global contiguous-block ordering, while every $2\times2$ instance has an optimal strategy with at most one switch. Proofs of \cref{prop:lambert,prop:linear-gap} are also deferred to the appendix.

\section{NP-hardness for general rank-one games}
\label{sec:hardness}

We now show that the rank-one problem is computationally intractable once the learner has an unrestricted number of actions. The reduction is from \textsc{Clique}, which is NP-complete \citep{karp1972reducibility}.

\paragraph{Proof idea.}
For each graph edge $\{p,q\}$, the \emph{hyperbolic edge gadget} introduces four learner logits, $\pm\rho(z_p+z_q)$ and $\pm\rho(z_p-z_q)$, with signs $+,+,-,-$ in the rank-one vector $v$. The hyperbolic addition identities make these four terms synthesize $\sinh(\rho z_p)\sinh(\rho z_q)$ in the numerator of the scalar observable $v^\top y$, while the corresponding softmax partition function contributes $\cosh(\rho z_p)\cosh(\rho z_q)$ in the denominator. For small $\rho$, the ratio uniformly approximates the edge monomial $z_pz_q$; summing the gadgets therefore expresses a graph quadratic through a single rank-one observable.

The Motzkin--Straus theorem identifies the maximum of $z^\top W_Gz$ over the simplex as $1-1/\omega(G)$, where $\omega(G)$ is the clique number. Homogeneity further implies that the dynamic integral of this quadratic is maximized by a constant strategy. The remaining work is to make the softmax approximation uniform, control the log-partition endpoint term, and preserve a constant gap under rational encoding.

\subsection{Statement of the main result}

We measure input size by the binary encoding length of rational $A,B,h^0,T,$ and $\eta$. An additive-$\epsilon$ value algorithm returns $\widehat R$ satisfying $|\widehat R-\Rstar|\le\epsilon$.

\begin{theorem}[Constant-additive hardness]
\label{thm:hardness}
Unless $\mathsf{P}=\mathsf{NP}$, there is no polynomial-time algorithm that, for every rational rank-one game, returns $\widehat R$ such that
\begin{equation}
    \left|\widehat R-\Rstar(0,1,A,B)\right|\le\frac1{10},
    \label{eq:additive-hard}
\end{equation}
when the learner follows Replicator Dynamics with $\eta=1$.

The hardness holds even for the restricted problem in which the optimizer must use a constant strategy. The constructed instance has $m=O(n^2)$ learner actions and polynomial-bit-complexity payoffs.
\end{theorem}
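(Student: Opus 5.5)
The plan is a reduction from \textsc{Clique}: given a graph $G=(V,E)$ with $|V|=n$ and a threshold $k$, I will build a rank-one game with $h^0=0$, $T=1$, $\eta=1$. I will then show that $\Rstar(0,1,A,B)$ equals an explicit affine function of the Motzkin--Straus value $1-1/\omega(G)$, up to an additive error well below $1/10$. The optimizer's actions are the vertices. For each edge $\{p,q\}$ I add four learner columns whose $B$-columns are $\rho(e_p+e_q)$, $-\rho(e_p+e_q)$, $\rho(e_p-e_q)$, $-\rho(e_p-e_q)$, with $v$-entries $+1,+1,-1,-1$. I take $u=c\one$, where $c>0$ is a rational scale, and set $A=c\one v^\top-B$, so that $\rank(A+B)=1$ and $m=4|E|=O(n^2)$.

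The structural computation goes as follows. With $y(z)=\softmax_1(B^\top z)$, the hyperbolic addition identities give
\[
g(z)=v^\top y(z)=\frac{\sum_{\{p,q\}\in E}\sinh(\rho z_p)\sinh(\rho z_q)}{\sum_{\{p,q\}\in E}\cosh(\rho z_p)\cosh(\rho z_q)}.
\]
By \cref{prop:decomposition}, and because $u^\top x(t)=c$,
\[
\Rcont=\Phi_1(0)-\Phi_1(B^\top z_x(1))+c\int_0^1 g(z_x(t))\dd t.
\]
Every entry of $B^\top z$ is at most $2\rho$ in absolute value, so the endpoint term lies in $[-2\rho,2\rho]$. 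On $\mathcal D_1$ we have $|\sinh s-s|\le s^3$ and $1\le\cosh s\le 1+s^2$ for small $s$. Hence, uniformly on $\mathcal D_1$,
\[
g(z)=\frac{\rho^2}{|E|}\Bigl(Q(z)+O(\rho^2)\Bigr),\qquad Q(z)=\sum_{\{p,q\}\in E}z_pz_q=\tfrac12 z^\top W_G z .
\]
I choose $c=K|E|/\rho^2$ with $K$ a polynomially bounded constant. The integral term then becomes $K\int_0^1 Q(z_x(t))\dd t$, up to an additive error $O(K\rho^2)$.

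Next I optimize the idealized functional. Since $\one^\top z_x(t)=t$, homogeneity and Motzkin--Straus give $Q(z_x(t))\le \tfrac{t^2}{2}(1-1/\omega)$ pointwise in $t$. Therefore $K\int_0^1 Q\le \tfrac K6(1-1/\omega)$ for every admissible strategy, including non-constant ones. The constant strategy that plays uniformly on a maximum clique attains this bound exactly. As a result, both the unrestricted value and the constant-restricted value lie within $\epsilon:=2\rho+O(K\rho^2)$ of $V(\omega)=\tfrac K6(1-1/\omega)$, which gives the "even for constant strategies" clause for free.

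To decide whether $\omega\ge k$, I use that $V(k)-V(k-1)=\tfrac{K}{6k(k-1)}\ge \tfrac{K}{6n^2}$. I take $K=6n^2$, so this gap is at least $1$. I then take $\rho=1/\poly(n)$ small enough that $\epsilon<1/20$. An algorithm with additive error $1/10$ would then let me threshold at $V(k)-1/2$: if $\omega\ge k$ the value is at least $V(k)-\epsilon$, and if $\omega\le k-1$ it is at most $V(k-1)+\epsilon$. All entries of $A$ and $B$ are rationals of polynomial bit length, since $c=6n^2|E|/\rho^2$.

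I expect the main obstacle to be the uniform error control in the second paragraph. There I must bound the ratio error of $\sinh\sinh/\cosh\cosh$ relative to $\rho^2 Q/|E|$ by $O(\rho^4)$ uniformly over $\mathcal D_1$. This matters because the prefactor $c\propto\rho^{-2}$ amplifies all errors. I would first try to handle it by writing the denominator as $|E|(1+\delta)$ with $0\le\delta\le 4\rho^2$ and the numerator as $\rho^2Q+r$ with $|r|\le C|E|\rho^4$ for an absolute constant $C$, and then expanding. The bounded-payoff $[0,1]$ variant stated in the informal theorem is not part of this statement, so I would not treat it here.
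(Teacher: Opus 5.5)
Your proposal is correct and follows essentially the same route as the paper. Both use the four-column hyperbolic edge gadget, the rank-one decomposition with $u^\top x$ constant, and a uniform $O(K\rho^2)$ approximation of the scaled observable by the edge quadratic; both then apply homogeneity plus Motzkin--Straus so that a constant strategy on a maximum clique is optimal, and finish with a constant-gap threshold test. The differences are cosmetic: you put the scale $c$ in $u$ rather than in $v$ (same matrix $A$), and you use the cruder Lipschitz bound $|\Phi_1(0)-\Phi_1(B^\top z)|\le 2\rho$ instead of $[-2\rho^2,0]$. Note also that the claimed $O(K\rho^2)$ error comes from the termwise bound $0\le r\le 3\rho^4 Q(z)$, not from $|r|\le C|E|\rho^4$, although the latter still suffices once $\rho$ is taken polynomially smaller.
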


We first prove the theorem with polynomially bounded but not normalized payoffs, then normalize to $[0,1]$ in \cref{subsec:normalization}.

\subsection{The reduction}

Let $(G,k)$ be an instance of \textsc{Clique}, where $G=(V,E)$ has $n=|V|$ vertices and $M=|E|\ge1$ edges. We may restrict to $3\le k\le n$ and graphs with at least one edge without changing NP-hardness.

Set
\begin{equation}
    \rho=\frac1{100n},
    \qquad
    K=n^2,
    \qquad
    \Lambda=\frac{2MK}{\rho^2}=20000Mn^4.
    \label{eq:parameters}
\end{equation}
The optimizer has one action per vertex. For each edge $e=\{p,q\}\in E$, create four learner actions whose columns in $B$ and coordinates in $v$ are
\begin{equation}
\begin{array}{c|c}
\text{column of }B&\text{coordinate of }v\\
\midrule
\rho(e_p+e_q)&+\Lambda\\
-\rho(e_p+e_q)&+\Lambda\\
\rho(e_p-e_q)&-\Lambda\\
-\rho(e_p-e_q)&-\Lambda.
\end{array}
\label{eq:gadget}
\end{equation}
Thus $m=4M$. Finally define
\begin{equation}
    A=\one v^\top-B.
    \label{eq:Aconstruct}
\end{equation}
Then $A+B=\one v^\top$ has rank exactly one.

Because $u=\one$ and every optimizer strategy satisfies $\one^\top x(t)=1$, \cref{prop:decomposition} gives
\begin{equation}
    \Rcont(x,0,1,A,B)
    =\Phi_1(0)-\Phi_1(B^\top z(1))
     +\int_0^1 g(z(t))\dd t,
    \label{eq:hard-decomp}
\end{equation}
where
\[
    g(z)=v^\top\softmax_{1}\!\left(B^\top z\right).
\]

\subsection{The hyperbolic edge gadget approximates the clique quadratic}

Let $W_G$ denote the adjacency matrix of $G$, and define
\begin{equation}
    q_G(z)=z^\top W_Gz=2\sum_{\{p,q\}\in E}z_pz_q.
    \label{eq:qG}
\end{equation}

\begin{lemma}[Hyperbolic edge gadget: exact representation]
\label{lem:hyperbolic}
For every $z\in\mathcal D_1$,
\begin{equation}
\boxed{
    g(z)=\Lambda
    \frac{\displaystyle\sum_{\{p,q\}\in E}
      \sinh(\rho z_p)\sinh(\rho z_q)}
    {\displaystyle\sum_{\{p,q\}\in E}
      \cosh(\rho z_p)\cosh(\rho z_q)}.}
    \label{eq:hyperbolic}
\end{equation}
\end{lemma}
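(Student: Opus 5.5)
The identity follows by writing out the definition of $g$ in \cref{eq:gdef} with $h^0=0$ and $\eta=1$, then grouping the $4M$ learner actions edge by edge. Here $g(z)=v^\top\softmax_1(B^\top z)=\sum_j v_j e^{(B^\top z)_j}\big/\sum_j e^{(B^\top z)_j}$. For the edge $e=\{p,q\}$, the four columns in \cref{eq:gadget} give the logits $\rho(z_p+z_q)$, $-\rho(z_p+z_q)$, $\rho(z_p-z_q)$ and $-\rho(z_p-z_q)$, with $v$-coordinates $+\Lambda,+\Lambda,-\Lambda,-\Lambda$. The fact that $z\in\mathcal D_1$ is not used, so the identity holds for every $z\in\R^n$.

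\textbf{Denominator.} The edge contributes
\[
e^{\rho(z_p+z_q)}+e^{-\rho(z_p+z_q)}+e^{\rho(z_p-z_q)}+e^{-\rho(z_p-z_q)}
=2\cosh(\rho(z_p+z_q))+2\cosh(\rho(z_p-z_q)).
\]
By the addition formula $\cosh(a\pm b)=\cosh a\cosh b\pm\sinh a\sinh b$, this equals $4\cosh(\rho z_p)\cosh(\rho z_q)$.

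\textbf{Numerator.} The edge contributes
\[
\Lambda\bigl[2\cosh(\rho(z_p+z_q))-2\cosh(\rho(z_p-z_q))\bigr]=4\Lambda\sinh(\rho z_p)\sinh(\rho z_q),
\]
again by the same addition formula.

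Summing over edges, the common factor $4$ cancels between numerator and denominator, which gives \cref{eq:hyperbolic}. The denominator is a sum of positive terms, and $M\ge1$, so the ratio is well defined.

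There is no real obstacle. The only care needed is bookkeeping: the numerator terms must carry the correct signs from $v$, and one must check that the four logits of each edge pair up into exactly the two $\cosh$ combinations.
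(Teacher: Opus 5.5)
Your proposal is correct and follows the paper's own proof: you group the four columns of each edge gadget and apply the $\cosh$ addition formulas to get $4\Lambda\sinh(\rho z_p)\sinh(\rho z_q)$ in the numerator and $4\cosh(\rho z_p)\cosh(\rho z_q)$ in the denominator, then cancel the factor four. Your two side remarks are valid additions the paper does not state explicitly: the identity holds for all $z\in\R^n$, and the denominator is positive.
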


\begin{proof}
For one edge $\{p,q\}$, the four logits in \cref{eq:gadget} are
\[
    \pm\rho(z_p+z_q),\qquad \pm\rho(z_p-z_q).
\]
Their contribution to the numerator of $v^\top\softmax_{1}\!\left(B^\top z\right)$ before normalization is
\begin{align*}
    2\Lambda\cosh(\rho(z_p+z_q))
    -2\Lambda\cosh(\rho(z_p-z_q))
    =4\Lambda\sinh(\rho z_p)\sinh(\rho z_q),
\end{align*}
while their contribution to the softmax denominator is
\begin{align*}
    2\cosh(\rho(z_p+z_q))+2\cosh(\rho(z_p-z_q))
    =4\cosh(\rho z_p)\cosh(\rho z_q).
\end{align*}
Summing over edges and canceling the factor four proves the claim.
\end{proof}

\begin{lemma}[Uniform quadratic approximation]
\label{lem:approx}
For every $z\in\mathcal D_1$,
\begin{equation}
    \frac{Kq_G(z)}{(1+\rho^2)^2}
    \le g(z)\le
    Kq_G(z)(1+\rho^2)^2.
    \label{eq:multiplicative-approx}
\end{equation}
Consequently,
\begin{equation}
\boxed{
    |g(z)-Kq_G(z)|\le3K\rho^2.}
    \label{eq:additive-approx}
\end{equation}
\end{lemma}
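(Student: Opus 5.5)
The plan is to start from the exact formula of \cref{lem:hyperbolic} and bound the numerator and the denominator separately, using elementary Taylor-type inequalities for $\sinh$ and $\cosh$ on a small interval. Every $z\in\mathcal D_1$ has $z_p\in[0,1]$, so each argument $a=\rho z_p$ lies in $[0,\rho]\subset[0,1/100]$. On $a\in[0,1]$ I will use three facts:
\[
a\le\sinh a\le a\cosh a\le a(1+a^2),\qquad 1\le\cosh a\le 1+a^2 .
\]
The upper bound $\cosh a\le 1+a^2$ follows by comparing power series: $\cosh a = 1+a^2/2+a^4/24+\cdots$ and each higher term is dominated for $a\le1$. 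The bound $\sinh a\le a\cosh a$ holds because $\tanh a\le a$. Since $z\ge0$, all $\sinh$ terms are nonnegative, so no sign cases arise.

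\textbf{Upper bound.} For each edge $\{p,q\}$,
\[
\sinh(\rho z_p)\sinh(\rho z_q)\le\rho^2 z_pz_q(1+\rho^2)^2,
\]
and every $\cosh\cosh$ term in the denominator is at least $1$, so the denominator is at least $M$. Substituting into \cref{eq:hyperbolic} gives
\[
g(z)\le\frac{\Lambda\rho^2}{M}(1+\rho^2)^2\sum_{E}z_pz_q .
\]
By \cref{eq:parameters}, $\Lambda\rho^2/M=2K$, and by \cref{eq:qG}, $2\sum_E z_pz_q=q_G(z)$. Hence $g(z)\le Kq_G(z)(1+\rho^2)^2$.

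\textbf{Lower bound.} Each numerator term is at least $\rho^2z_pz_q$. Each denominator term is at most $(1+\rho^2)^2$, so the denominator is at most $M(1+\rho^2)^2$. The same constant bookkeeping then yields $g(z)\ge Kq_G(z)/(1+\rho^2)^2$. This proves \cref{eq:multiplicative-approx}.

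\textbf{Additive bound.} I need $q_G(z)\le1$ on $\mathcal D_1$. This holds because
\[
q_G(z)=z^\top W_Gz\le z^\top(\one\one^\top-I)z\le(\one^\top z)^2\le1,
\]
which avoids invoking Motzkin--Straus. From the two-sided multiplicative bound,
\[
|g(z)-Kq_G(z)|\le Kq_G(z)\max\bigl\{(1+\rho^2)^2-1,\;1-(1+\rho^2)^{-2}\bigr\}.
\]
The first quantity equals $2\rho^2+\rho^4$, and the second is at most the first. Since $\rho<1$, we have $2\rho^2+\rho^4\le3\rho^2$, which gives \cref{eq:additive-approx}.

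The only step needing care is the scalar inequality $\sinh a\le a(1+a^2)$, equivalently $\cosh a\le1+a^2$, on $[0,1]$, together with checking that the constant $\Lambda\rho^2/M$ equals exactly $2K$. Everything else is monotone bookkeeping.
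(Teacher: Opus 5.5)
Your proposal is correct and follows the paper's proof essentially step for step. You bound the numerator and denominator of the exact formula from \cref{lem:hyperbolic} using $a\le\sinh a\le a(1+a^2)$ and $1\le\cosh a\le 1+a^2$ on $[0,1]$, substitute $\Lambda\rho^2/M=2K$, and convert to an additive bound via $q_G(z)\le(\one^\top z)^2\le1$ and $2\rho^2+\rho^4\le 3\rho^2$. The differences are minor refinements: you derive the $\sinh$ bound through $\tanh a\le a$ rather than the paper's direct power series for $\sinh r/r$, and you explicitly check $1-(1+\rho^2)^{-2}\le(1+\rho^2)^2-1$, which the paper leaves implicit.
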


\begin{proof}
For $0\le r\le1$,
\begin{equation}
    1\le\frac{\sinh r}{r}\le1+r^2,
    \qquad
    1\le\cosh r\le1+r^2.
    \label{eq:elementary-hyperbolic}
\end{equation}
Indeed, for $0\le r\le1$,
\[
    \frac{\sinh r}{r}
    =1+\sum_{\ell\ge1}\frac{r^{2\ell}}{(2\ell+1)!}
    \le1+r^2\sum_{\ell\ge1}\frac1{(2\ell+1)!}
    <1+r^2\quad\text{for }r>0,
\]
with equality in the displayed non-strict bound at $r=0$; the same argument gives
\[
    \cosh r
    =1+\sum_{\ell\ge1}\frac{r^{2\ell}}{(2\ell)!}
    \le1+r^2\sum_{\ell\ge1}\frac1{(2\ell)!}
    <1+r^2\quad\text{for }r>0.
\] Since $0\le z_i\le1$ and $\rho<1$, the numerator in \cref{eq:hyperbolic} lies between
\[
    \rho^2\sum_{\{p,q\}\in E}z_pz_q
    \quad\text{and}\quad
    \rho^2(1+\rho^2)^2\sum_{\{p,q\}\in E}z_pz_q,
\]
and the denominator lies between $M$ and $M(1+\rho^2)^2$. Substituting $\Lambda=2MK/\rho^2$ and using \cref{eq:qG} proves \cref{eq:multiplicative-approx}.

Since $0\le q_G(z)\le(\one^\top z)^2\le1$ and
\[
    (1+\rho^2)^2-1\le3\rho^2,
\]
the additive bound follows.
\end{proof}

\subsection{The endpoint term is negligible}

\begin{lemma}[Endpoint bound]
\label{lem:endpoint}
For every $z\in\mathcal D_1$,
\begin{equation}
    -2\rho^2
    \le\Phi_1(0)-\Phi_1(B^\top z)
    \le0.
    \label{eq:endpoint-bound}
\end{equation}
\end{lemma}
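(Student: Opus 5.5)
We compute $\Phi_1(B^\top z)$ exactly with the same hyperbolic pairing used in \cref{lem:hyperbolic}, and then bound it using \cref{eq:elementary-hyperbolic}.

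\emph{Step 1 (exact form).} The four logits of edge $\{p,q\}$ are $\pm\rho(z_p+z_q)$ and $\pm\rho(z_p-z_q)$. Their exponentials sum to
\[
2\cosh(\rho(z_p+z_q))+2\cosh(\rho(z_p-z_q))=4\cosh(\rho z_p)\cosh(\rho z_q).
\]
Hence
\[
\Phi_1(B^\top z)=\log\Bigl(4\sum_{\{p,q\}\in E}\cosh(\rho z_p)\cosh(\rho z_q)\Bigr),
\qquad
\Phi_1(0)=\log(4M).
\]
This gives
\[
\Phi_1(0)-\Phi_1(B^\top z)=-\log\Bigl(\frac1M\sum_{\{p,q\}\in E}\cosh(\rho z_p)\cosh(\rho z_q)\Bigr).
\]

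\emph{Step 2 (upper bound).} Since $\cosh\ge1$, the average inside the logarithm is at least $1$. Its logarithm is therefore nonnegative, which gives the upper bound $0$.

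\emph{Step 3 (lower bound).} For $z\in\mathcal D_1$ we have $0\le\rho z_i\le\rho<1$, so \cref{eq:elementary-hyperbolic} gives $\cosh(\rho z_i)\le 1+\rho^2 z_i^2$. Two refinements are possible here. One could use the cruder bound $1+\rho^2$, or the finer bound $\cosh(\rho z_p)\cosh(\rho z_q)\le \exp(\rho^2(z_p^2+z_q^2))$, which follows from $1+r^2\le e^{r^2}$ and the finer estimate $\cosh r\le e^{r^2/2}$. The simplest route is the crude one: each product is at most $(1+\rho^2)^2$, so the average is at most $(1+\rho^2)^2$. Using $\log(1+x)\le x$,
\[
\log\bigl((1+\rho^2)^2\bigr)=2\log(1+\rho^2)\le 2\rho^2.
\]
This yields $\Phi_1(0)-\Phi_1(B^\top z)\ge-2\rho^2$.

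The only step that needs care is the exact identification of the partition function in Step 1, specifically keeping the factor $4M$ consistent between $z$ and $0$. After that, the bounds are immediate from \cref{eq:elementary-hyperbolic}; no step is a real obstacle.
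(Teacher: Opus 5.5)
Your proof is correct and follows essentially the paper's argument: you identify the exponential mass of each edge's four columns as $4\cosh(\rho z_p)\cosh(\rho z_q)$, and you rewrite the difference as $-\log$ of the edge-average of $\cosh(\rho z_p)\cosh(\rho z_q)$, with the factor $4M$ cancelling against $\Phi_1(0)=\log(4M)$. You then sandwich that average between $1$ and $(1+\rho^2)^2$ and apply $\log(1+\rho^2)\le\rho^2$, exactly as the paper does. Your aside about a finer exponential bound is harmless but unused; note that $\cosh r\le \e^{r^2/2}$ alone would already give the stronger factor $\exp\bigl(\rho^2(z_p^2+z_q^2)/2\bigr)$.
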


\begin{proof}
The total exponential mass of the four columns associated with edge $\{p,q\}$ is
\[
    4\cosh(\rho z_p)\cosh(\rho z_q).
\]
Since $m=4M$,
\[
    \Phi_1(0)-\Phi_1(B^\top z)
    =\log\frac{M}{\sum_{\{p,q\}\in E}\cosh(\rho z_p)\cosh(\rho z_q)}.
\]
The denominator lies in $[M,M(1+\rho^2)^2]$. Therefore the expression lies in
\[
    [-2\log(1+\rho^2),0]\subseteq[-2\rho^2,0].
\]
\end{proof}

\subsection{The dynamic quadratic collapses to Motzkin--Straus}

Let
\begin{equation}
    q_G^*=\max_{w\in\Delta_n}q_G(w).
\end{equation}

\begin{lemma}[Dynamic homogeneous quadratic]
\label{lem:dynamic-q}
\begin{equation}
\boxed{
    \sup_{x:[0,1]\to\Delta_n}
      \int_0^1q_G(z_x(t))\dd t
    =\frac{q_G^*}{3}.}
    \label{eq:dynamic-q}
\end{equation}
Moreover, the supremum is attained by a constant strategy.
\end{lemma}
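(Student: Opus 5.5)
The plan is to prove matching lower and upper bounds, both resting on the degree-two homogeneity of $q_G$.

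\emph{Lower bound.} Let $w^*\in\Delta_n$ attain $q_G^*$; it exists because $\Delta_n$ is compact and $q_G$ is continuous. Take the constant strategy $x(t)\equiv w^*$. Then $z_x(t)=tw^*$, so
\[
    \int_0^1 q_G(tw^*)\dd t=q_G^*\int_0^1t^2\dd t=\frac{q_G^*}{3}.
\]
This shows the supremum is at least $q_G^*/3$. It also shows the bound is attained by a constant strategy, once the upper bound is established.

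\emph{Upper bound.} Let $x$ be any admissible strategy. For every $t\in(0,1]$ we have $z_x(t)\in\R^n_+$ and $\one^\top z_x(t)=t$. Hence $z_x(t)/t\in\Delta_n$. By homogeneity,
\[
    q_G(z_x(t))=t^2\,q_G\!\left(z_x(t)/t\right)\le t^2q_G^*.
\]
At $t=0$ the same inequality holds trivially, since $z_x(0)=0$. The map $t\mapsto z_x(t)$ is absolutely continuous, so the integrand is measurable. Integrating the pointwise bound over $[0,1]$ gives
\[
    \int_0^1 q_G(z_x(t))\dd t\le \frac{q_G^*}{3}.
\]

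Combining the two bounds proves \cref{eq:dynamic-q}. No step here is a real obstacle. The key observation is that the cumulative-play constraint $\one^\top z_x(t)=t$ places each normalized state $z_x(t)/t$ in the simplex. This lets us bound the integrand pointwise in $t$, so we never need to reason about the path's ordering.
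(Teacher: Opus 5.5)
Your proposal is correct and follows essentially the same argument as the paper: for the upper bound you use degree-two homogeneity together with $z_x(t)/t\in\Delta_n$ to get the pointwise bound $q_G(z_x(t))\le t^2q_G^*$, and for the matching lower bound you use the constant strategy $x\equiv w^*$. Your extra remarks on existence of the maximizer, the case $t=0$, and measurability of the integrand are correct details that the paper leaves implicit.
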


\begin{proof}
For $t>0$, $z_x(t)/t\in\Delta_n$. Since $q_G$ is homogeneous of degree two,
\[
    q_G(z_x(t))
    =t^2q_G(z_x(t)/t)
    \le t^2q_G^*.
\]
Integrating gives the upper bound $q_G^*/3$. If $w^*$ maximizes $q_G$ on the simplex, the constant strategy $x(t)=w^*$ satisfies $z_x(t)=tw^*$ and attains equality.
\end{proof}

The Motzkin--Straus theorem states
\begin{equation}
\boxed{
    q_G^*=1-\frac1{\omega(G)},}
    \label{eq:motzkin-straus}
\end{equation}
where $\omega(G)$ is the clique number.

Combining the previous lemmas yields the quantitative reduction.

\begin{proposition}[Value approximation]
\label{prop:value-approx}
For the constructed rank-one game,
\begin{equation}
\boxed{
    \left|
    \Rstar(0,1,A,B)
    -\frac K3\left(1-\frac1{\omega(G)}\right)
    \right|
    \le(3K+2)\rho^2.}
    \label{eq:value-approx}
\end{equation}
The same bound holds if $\Rstar$ is replaced by the optimal value over constant optimizer strategies.
\end{proposition}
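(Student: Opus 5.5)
The plan is to sandwich $\Rstar(0,1,A,B)$ between an upper and a lower bound, both starting from the decomposition \cref{eq:hard-decomp}. For any admissible $x$, the cumulative play $z_x(t)$ lies in $\mathcal D_1$ for every $t\in[0,1]$. So \cref{lem:endpoint} applies to the endpoint term at $z=z_x(1)$, and \cref{lem:approx} applies pointwise inside the integral.

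For the upper bound, I would drop the endpoint term using its nonpositivity from \cref{lem:endpoint}. Inside the integral I would replace $g(z_x(t))$ by $Kq_G(z_x(t))+3K\rho^2$, using \cref{eq:additive-approx}. This gives
\[
\Rcont(x,0,1,A,B)\le K\int_0^1 q_G(z_x(t))\dd t+3K\rho^2\le \frac{K q_G^*}{3}+3K\rho^2,
\]
where the last step is \cref{lem:dynamic-q}. Taking the supremum over $x$ and substituting the Motzkin--Straus identity \cref{eq:motzkin-straus} bounds $\Rstar$ from above by $\frac K3(1-1/\omega(G))+3K\rho^2$.

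For the lower bound, I would use the constant strategy $x(t)\equiv w^*$, where $w^*$ is a Motzkin--Straus maximizer, so that $z_x(t)=tw^*\in\mathcal D_1$. The endpoint term is at least $-2\rho^2$ by \cref{lem:endpoint}, and $g(tw^*)\ge Kq_G(tw^*)-3K\rho^2=Kt^2q_G^*-3K\rho^2$. Integrating over $[0,1]$ shows that this strategy earns at least $\frac K3 q_G^*-(3K+2)\rho^2$. Since this strategy is admissible and constant, the same lower bound holds both for $\Rstar$ and for the optimum over constant strategies.

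The constant-strategy optimum is squeezed between the same two bounds. It is at most $\Rstar$, so it inherits the upper bound, and it is at least the value of $x\equiv w^*$, so it inherits the lower bound. Both one-sided errors are at most $(3K+2)\rho^2$, which proves \cref{eq:value-approx} in both versions. I do not expect a real obstacle here. The only point needing care is the error accounting: the endpoint error enters only the lower side, and the $3K\rho^2$ approximation error enters both sides, so they combine into $(3K+2)\rho^2$ rather than cancelling or being double-counted.
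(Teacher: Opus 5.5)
Your proposal is correct and matches the paper's proof essentially line for line. You bound every strategy above by $\frac{Kq_G^*}{3}+3K\rho^2$, using the nonpositive endpoint term, the additive approximation of $g$, and the dynamic homogeneity lemma. You then show the constant Motzkin--Straus maximizer $w^*$ earns at least $\frac{Kq_G^*}{3}-(3K+2)\rho^2$, which also squeezes the constant-strategy optimum; your explicit error accounting just spells out what the paper leaves implicit.
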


\begin{proof}
By \cref{lem:approx,lem:endpoint,lem:dynamic-q}, every optimizer strategy has reward at most
\[
    \frac{Kq_G^*}{3}+3K\rho^2.
\]
The constant strategy $w^*$ from \cref{lem:dynamic-q} has reward at least
\[
    \frac{Kq_G^*}{3}-(3K+2)\rho^2.
\]
Use \cref{eq:motzkin-straus}. Since the lower-bound strategy is constant, the same sandwich applies to the constant-strategy optimum.
\end{proof}

\subsection{Completing the reduction}

With $K=n^2$ and $\rho=1/(100n)$,
\begin{equation}
    (3K+2)\rho^2
    =\frac{3n^2+2}{10000n^2}
    <\frac1{2000}
    \label{eq:error-small}
\end{equation}
for $n\ge3$.

If $\omega(G)\ge k$, then
\[
    \Rstar\ge\frac{n^2}{3}\left(1-\frac1k\right)-\frac1{2000}.
\]
If $\omega(G)\le k-1$, then
\[
    \Rstar\le\frac{n^2}{3}\left(1-\frac1{k-1}\right)+\frac1{2000}.
\]
The gap between the ideal values is
\begin{equation}
    \frac{n^2}{3k(k-1)}\ge\frac13,
    \label{eq:clique-gap}
\end{equation}
because $k\le n$. After approximation errors, the gap remains larger than $0.332$. Allowing an additional error of $1/10$ on each side still leaves a separation larger than $0.132$, so comparison with the midpoint of the two displayed thresholds decides \textsc{Clique}. This proves \cref{thm:hardness}. The same argument applies to the optimal constant-strategy value by \cref{prop:value-approx}.

\subsection{Normalization to bounded nonnegative payoffs}
\label{subsec:normalization}

The preceding construction uses signed and polynomially large entries. We now show that this is not essential.

Let $J=\one_n\one_m^\top$ and define
\begin{equation}
    B^+=B+\rho J,
    \qquad
    v^+=v+(\Lambda+2\rho)\one_m,
    \qquad
    A^+=\one_n(v^+)^\top-B^+.
    \label{eq:positive-shift}
\end{equation}
Then
\[
    A^++B^+=\one_n(v^+)^\top
\]
has rank one. Also, $B^+\in[0,2\rho]^{n\times m}$. On the columns with original $v_j=-\Lambda$, we have $v_j^+=2\rho$ and $A^+_{ij}=2\rho-B^+_{ij}\in[0,2\rho]$. On the columns with original $v_j=+\Lambda$, we have $A^+_{ij}\in[2\Lambda,2\Lambda+2\rho]$. Thus $A^+,B^+$ are nonnegative.

Adding $\rho J$ to $B$ adds the common quantity $\rho t$ to every learner logit at time $t$, so the learner strategy is unchanged. Moreover,
\begin{equation}
    A^+=A+(\Lambda+\rho)J,
\end{equation}
so every optimizer strategy receives the known additive reward $\Lambda+\rho$ over the unit horizon.

Let
\begin{equation}
    C=2\Lambda+2,
    \qquad
    \bar A=A^+/C,
    \qquad
    \bar B=B^+/C,
    \qquad
    \bar T=C.
    \label{eq:time-scale}
\end{equation}
Then $\bar A,\bar B\in[0,1]^{n\times m}$ and $\rank(\bar A+\bar B)=1$. The time change $\bar x(t)=x(t/C)$ gives
\begin{equation}
    \Rstar(0,C,\bar A,\bar B)
    =\Rstar(0,1,A^+,B^+)
    =\Rstar(0,1,A,B)+\Lambda+\rho.
    \label{eq:scaling-value}
\end{equation}
The final shift is known from the constructed instance, so it can be subtracted from any value estimate. Since $C=O(Mn^4)$, the new horizon is polynomially bounded and has polynomial encoding length.

\begin{corollary}[Hardness with bounded payoffs]
\label{cor:bounded-hardness}
Unless $\mathsf{P}=\mathsf{NP}$, additive-$1/10$ approximation of $\Rstar(0,T,A,B)$ is NP-hard for rank-one games satisfying
\[
    A,B\in[0,1]^{n\times m},
    \qquad
    \eta=1,
    \qquad
    T=\poly(n,m).
\]
The result remains true for the constant-strategy optimum. In particular, the bounded-payoff problem admits no fully polynomial-time approximation scheme unless $\mathsf{P}=\mathsf{NP}$.
\end{corollary}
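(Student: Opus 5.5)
The plan is to compose the reduction of \cref{thm:hardness} with the normalization of \cref{subsec:normalization}, and to check that every step is polynomial and preserves the additive gap exactly.

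Given a \textsc{Clique} instance $(G,k)$, first build $(A,B)$ from \cref{eq:gadget,eq:Aconstruct}. Then form $(A^+,B^+,v^+)$ by \cref{eq:positive-shift}, and set $\bar A=A^+/C$, $\bar B=B^+/C$, $\bar T=C=2\Lambda+2$. All entries are rationals whose bit length is polynomial in $n$ and $M$, since $\rho=1/(100n)$ and $\Lambda=20000Mn^4$. The learner's action count is $m=4M$, so $\bar T=O(Mn^4)$ is polynomial in $n,m$.

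Three properties need to be confirmed.
\begin{enumerate}[label=(\roman*)]
\item \emph{Entry bounds.} We need $\bar A,\bar B\in[0,1]^{n\times m}$. The bound $B^+\in[0,2\rho]$ holds because each entry of $B$ lies in $\{0,\pm\rho\}$. The entries of $A^+$ lie in $[0,2\Lambda+2\rho]\subseteq[0,C]$.
\item \emph{Value identity.} We need \cref{eq:scaling-value}. The learner is unchanged under $B\mapsto B^+$, because $B^{+\top}z=B^\top z+\rho(\one^\top z)\one_m$ and softmax is shift-invariant. The reward changes by $(\Lambda+\rho)\int_0^1 x^\top J y=\Lambda+\rho$, since $A^+=A+(\Lambda+\rho)J$; this follows from checking $\one v^{+\top}-B^+=\one v^\top-B+(\Lambda+2\rho-\rho)J$. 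For the time change with $\eta=1$, the map $\bar x(s)=x(s/C)$ gives $\bar z(s)=C\,z(s/C)$, hence $\bar B^\top\bar z(s)=B^{+\top}z(s/C)$, so the learner states agree. The reward then becomes $\int_0^C \bar x^\top (A^+/C)\bar y\,\dd s=\int_0^1 x^\top A^+y\,\dd t$. This map is a bijection between admissible controls, and it sends constant controls to constant controls, so the identity also holds for the constant-strategy optimum.
\item \emph{Gap transfer.} Given an additive-$1/10$ estimate $\widehat R$ of $\Rstar(0,C,\bar A,\bar B)$, the quantity $\widehat R-\Lambda-\rho$ is an additive-$1/10$ estimate of $\Rstar(0,1,A,B)$. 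Its constant-strategy analogue is likewise an estimate of the constant-strategy optimum. The threshold comparison in the completion of \cref{thm:hardness} then decides \textsc{Clique}.
\end{enumerate}

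For the FPTAS statement, the payoffs now lie in $[0,1]$ and $\bar T$ is polynomial. An FPTAS with additive error $\epsilon$ (or relative error $\epsilon$ on a value at most $\bar T$), run with $\epsilon=1/(10\bar T)$ or $\epsilon=1/10$ as appropriate, would run in polynomial time and give additive error at most $1/10$, contradicting hardness. I would state this under the additive-approximation reading. For the multiplicative reading, the value is at most $\bar T$, so relative accuracy $1/(10\bar T)$ implies additive accuracy $1/10$.

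The main obstacle is the care needed in step (ii): the time rescaling must be exactly compatible with $\eta=1$ and the uniform start $h^0=0$. The uniform start is preserved, because the shifted and rescaled logits still vanish at $t=0$.
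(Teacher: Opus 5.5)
Your proposal is correct and follows the paper's route. You compose the \textsc{Clique} reduction with the shift-and-rescale normalization, and you subtract the known offset $\Lambda+\rho$, which also works for the constant-strategy optimum because the time change maps constant controls to constant controls. You then rule out an FPTAS by running it with relative accuracy of order $1/\bar T$, using $0\le\Rstar\le\bar T$ and the fact that $\bar T$ is polynomial. The paper picks $\epsilon=1/(20T)$ for extra slack where you pick $1/(10\bar T)$; both work, since the threshold argument tolerates additive error $1/10$.
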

\begin{proof}
The additive-hardness claim follows from the preceding transformation and the known shift in \cref{eq:scaling-value}. For the final statement, note that every bounded-payoff instance has value at most $T$. On the constructed family, an FPTAS run with relative accuracy $\epsilon=1/(20T)$ would therefore have additive error at most $1/20$; moreover, $1/\epsilon=20T$ is polynomial in the size of the original \textsc{Clique} instance. This would contradict \cref{thm:hardness}.
\end{proof}

\section{Discussion and open questions}

Our results give the following initial complexity picture for optimizing against Replicator Dynamics:
\begin{center}
\begin{tabular}{@{}p{0.41\linewidth}p{0.53\linewidth}@{}}
\toprule
Game class & Status\\
\midrule
$A+B=0$ & Constant optimum; convex optimization \citep{assos2024maximizing}\\
$\rank(A+B)=1$, conservative one-form & Constant optimum; endpoint optimization\\
$\rank(A+B)=1$, two learner actions & Ordering under a sign condition; $2\times2$ has one switch\\
$\rank(A+B)=1$, unrestricted learner actions & NP-hard to approximate\\
\bottomrule
\end{tabular}
\end{center}

The reduction shows that the static rank of $A+B$ is not by itself the right tractability parameter. A natural complementary quantity is the effective dimension of the learner's log-odds state,
\begin{equation}
    d_{\mathrm{eff}}
    =\rank\left(B\left(I_m-\frac1m\one\one^\top\right)\right).
    \label{eq:deff}
\end{equation}
This is the dimension of the centered learner-payoff state that the optimizer can steer. Our hardness construction has growing $d_{\mathrm{eff}}$, whereas $m=2$ forces $d_{\mathrm{eff}}\le1$. Dimension alone, however, does not eliminate temporal advantage: the train-then-harvest example already has a one-dimensional learner state. The stronger condition in \cref{thm:pathind} is geometric---the variation in the scalar rank-one reward field must align with the vector $u$ so that only aggregate play matters. \Cref{cor:onedim} gives the concrete sufficient condition $BP=uw^\top$. This suggests that both the dimension of the steerable learner state and its alignment with the rank-one direction may be relevant parameters for tractability.

Several questions remain open.
\begin{enumerate}[leftmargin=2em]
    \item Is the rank-one problem fixed-parameter tractable in $m$ or $d_{\mathrm{eff}}$? In particular, is there a polynomial-time algorithm for exactly three learner actions?
    \item Can the global ordering theorem be extended beyond $m=2$ under total positivity, monotone likelihood-ratio, or sign-regularity conditions on $B$ and $v$?
    \item What is the complexity of rank-one optimization against discrete-time MWU? The discrete problem includes an additional Bregman-movement term, recorded in \cref{app:discrete}.
    \item Can one approximate the general rank-one value by exploiting the rank-two curvature field, perhaps under bounded variation or small $\eta T$?
    \item Which parts of the structure survive for learners with discounted or forgetful state, for example $\dot h=B^\top x-\lambda h$? Such dynamics remove the exact cumulative-history representation and may change both the path-independence criterion and the optimizer's ability to profit from long-lived state manipulation.
\end{enumerate}

\paragraph{Implications for AI oversight.}
A natural interpretation of the optimizer--learner interaction is a capable agent facing an adaptive safety mechanism: for example, a trusted monitor that learns when to intervene, a reward model updated from observed trajectories, or a control protocol that grants greater autonomy as the agent accumulates a history of apparently safe behavior. In such systems, an agent may care not only about whether its current action passes oversight, but also about how that action changes the oversight it will face later. Our train-then-harvest example gives a minimal version of this familiar ``build trust, then exploit'' pattern: the optimizer first behaves so as to move the learner into a favorable state and only later takes advantage of that state. The strategy is simple and explicit: an agent only needs to find one profitable way of shaping oversight, whereas an auditor trying to certify robustness must rule out all strategies that could do better. Our hardness result makes this asymmetry concrete: even approximating how much reward the best optimizer can extract is NP-hard in general, and the hardness persists even when the optimizer is restricted to constant play. Our positive results identify the complementary possibility: for some feedback structures, the order of behavior carries no additional value, so an agent cannot benefit from appearing one way early merely to induce a more favorable response later. Viewed through the lens of adaptive monitoring or AI control, the central question is therefore not only whether an overseer performs well on the behavior it has seen, but whether its update rule itself creates a strategically exploitable state.

\paragraph{What governs tractability.}
Rank-one games occupy a genuinely intermediate regime. The zero-sum endpoint calculation survives, augmented by a single path-dependent term whose local curvature has rank at most two; two-action learners retain strong ordering structure; yet the unrestricted planning problem is NP-hard. The broader lesson concerns parameters. The rank of $A+B$ is a central measure of departure from zero sum in equilibrium computation, but it is insufficient to control the complexity of transient planning against a learner. Our results instead point to the dimension and geometry of the learner state that the optimizer can steer as the more relevant candidates. Static equilibrium structure, however strong, does not automatically transfer to transient control of learning dynamics: the two problems already come apart at the first nonzero level of the rank hierarchy.

\bibliographystyle{plainnat}
\bibliography{references}

@inproceedings{assos2024maximizing,
  title={Maximizing Utility in Multi-Agent Environments by Anticipating the Behavior of Other Learners},
  author={Assos, Angelos and Dagan, Yuval and Daskalakis, Constantinos},
  booktitle={Advances in Neural Information Processing Systems},
  volume={37},
  year={2024}
}

@inproceedings{assos2025intractability,
  title={Computational Intractability of Strategizing against Online Learners},
  author={Assos, Angelos and Dagan, Yuval and Rajaraman, Nived},
  booktitle={Proceedings of the Thirty-Eighth Conference on Learning Theory},
  series={Proceedings of Machine Learning Research},
  volume={291},
  pages={169--199},
  year={2025}
}

@article{adsul2021fast,
  title={Fast Algorithms for Rank-1 Bimatrix Games},
  author={Adsul, Bharat and Garg, Jugal and Mehta, Ruta and Sohoni, Milind and von Stengel, Bernhard},
  journal={Operations Research},
  volume={69},
  number={2},
  pages={613--631},
  year={2021}
}

@inproceedings{patris2024learning,
  title={Learning {Nash} Equilibria in Rank-1 Games},
  author={Patris, Nikolas and Panageas, Ioannis},
  booktitle={International Conference on Learning Representations},
  year={2024}
}

@article{kannan2010games,
  title={Games of Fixed Rank: A Hierarchy of Bimatrix Games},
  author={Kannan, Ravi and Theobald, Thorsten},
  journal={Economic Theory},
  volume={42},
  number={1},
  pages={157--173},
  year={2010}
}

@inproceedings{deng2019strategizing,
  title={Strategizing against No-Regret Learners},
  author={Deng, Yuan and Schneider, Jon and Sivan, Balasubramanian},
  booktitle={Advances in Neural Information Processing Systems},
  volume={32},
  year={2019}
}

@article{motzkin1965maxima,
  title={Maxima for Graphs and a New Proof of a Theorem of {Tur{\'a}n}},
  author={Motzkin, Theodore S. and Straus, Ernst G.},
  journal={Canadian Journal of Mathematics},
  volume={17},
  pages={533--540},
  year={1965}
}

@incollection{karp1972reducibility,
  title={Reducibility among Combinatorial Problems},
  author={Karp, Richard M.},
  booktitle={Complexity of Computer Computations},
  editor={Miller, Raymond E. and Thatcher, James W.},
  pages={85--103},
  publisher={Plenum Press},
  address={New York},
  year={1972}
}

@article{hubinger2019risks,
  title={Risks from Learned Optimization in Advanced Machine Learning Systems},
  author={Hubinger, Evan and van Merwijk, Chris and Mikulik, Vladimir and Skalse, Joar and Garrabrant, Scott},
  journal={arXiv preprint arXiv:1906.01820},
  year={2019}
}

@article{everitt2021reward,
  title={Reward Tampering Problems and Solutions in Reinforcement Learning: A Causal Influence Diagram Perspective},
  author={Everitt, Tom and Hutter, Marcus and Kumar, Ramana and Krakovna, Victoria},
  journal={Synthese},
  volume={198},
  pages={6435--6467},
  year={2021},
  doi={10.1007/s11229-021-03141-4}
}

@article{hadfieldmenell2016offswitch,
  title={The Off-Switch Game},
  author={Hadfield-Menell, Dylan and Dragan, Anca and Abbeel, Pieter and Russell, Stuart J.},
  journal={arXiv preprint arXiv:1611.08219},
  year={2016}
}

@inproceedings{hadfieldmenell2016cirl,
  title={Cooperative Inverse Reinforcement Learning},
  author={Hadfield-Menell, Dylan and Russell, Stuart J. and Abbeel, Pieter and Dragan, Anca D.},
  booktitle={Advances in Neural Information Processing Systems},
  volume={29},
  pages={3909--3917},
  year={2016}
}

@inproceedings{greenblatt2024control,
  title={{AI} Control: Improving Safety Despite Intentional Subversion},
  author={Greenblatt, Ryan and Shlegeris, Buck and Sachan, Kshitij and Roger, Fabien},
  booktitle={Proceedings of the 41st International Conference on Machine Learning},
  series={Proceedings of Machine Learning Research},
  volume={235},
  pages={16295--16336},
  year={2024}
}

@article{overman2025oversight,
  title={The Oversight Game: Learning to Cooperatively Balance an {AI} Agent's Safety and Autonomy},
  author={Overman, William and Bayati, Mohsen},
  journal={arXiv preprint arXiv:2510.26752},
  year={2025}
}

@inproceedings{braverman2018selling,
  title={Selling to a No-Regret Buyer},
  author={Braverman, Mark and Mao, Jieming and Schneider, Jon and Weinberg, S. Matthew},
  booktitle={Proceedings of the 19th ACM Conference on Economics and Computation},
  pages={523--538},
  publisher={ACM},
  year={2018},
  doi={10.1145/3219166.3219233}
}

@inproceedings{mansour2022bayesian,
  title={Strategizing against Learners in Bayesian Games},
  author={Mansour, Yishay and Mohri, Mehryar and Schneider, Jon and Sivan, Balasubramanian},
  booktitle={Proceedings of the 35th Conference on Learning Theory},
  series={Proceedings of Machine Learning Research},
  volume={178},
  pages={5221--5252},
  year={2022}
}

@inproceedings{cai2023multiple,
  title={Selling to Multiple No-Regret Buyers},
  author={Cai, Linda and Weinberg, S. Matthew and Wildenhain, Evan and Zhang, Shirley},
  booktitle={Proceedings of the 19th International Conference on Web and Internet Economics},
  pages={113--129},
  publisher={Springer},
  year={2023}
}

@article{guruganesh2024contracting,
  title={Contracting with a Learning Agent},
  author={Guruganesh, Guru and Kolumbus, Yoav and Schneider, Jon and Talgam-Cohen, Inbal and Vlatakis-Gkaragkounis, Emmanouil-Vasileios and Wang, Joshua R. and Weinberg, S. Matthew},
  journal={arXiv preprint arXiv:2401.16198},
  year={2024}
}

@inproceedings{brown2023learning,
  title={Is Learning in Games Good for the Learners?},
  author={Brown, William and Schneider, Jon and Vodrahalli, Kiran},
  booktitle={Advances in Neural Information Processing Systems},
  volume={36},
  pages={54228--54249},
  year={2023}
}

@article{guo2023hedge,
  title={The Optimal Strategy against Hedge Algorithm in Repeated Games},
  author={Guo, Xinxiang and Mu, Yifen},
  journal={arXiv preprint arXiv:2312.09472},
  year={2023}
}

@inproceedings{lipton2003playing,
  title={Playing Large Games Using Simple Strategies},
  author={Lipton, Richard J. and Markakis, Evangelos and Mehta, Aranyak},
  booktitle={Proceedings of the 4th ACM Conference on Electronic Commerce},
  pages={36--41},
  publisher={ACM},
  year={2003},
  doi={10.1145/779928.779933}
}

@incollection{theobald2009enumerating,
  title={Enumerating the Nash Equilibria of Rank 1-Games},
  author={Theobald, Thorsten},
  booktitle={Polyhedral Computation},
  editor={Avis, David and Bremner, David and Deza, Antoine},
  series={CRM Proceedings and Lecture Notes},
  volume={48},
  pages={115--130},
  publisher={American Mathematical Society},
  year={2009}
}

@inproceedings{adsul2011homeomorphism,
  title={Rank-1 Bimatrix Games: A Homeomorphism and a Polynomial Time Algorithm},
  author={Adsul, Bharat and Garg, Jugal and Mehta, Ruta and Sohoni, Milind A.},
  booktitle={Proceedings of the 43rd Annual ACM Symposium on Theory of Computing},
  pages={195--204},
  publisher={ACM},
  year={2011},
  doi={10.1145/1993636.1993664}
}

@article{vonstengel2012exponential,
  title={Rank-1 Games with Exponentially Many Nash Equilibria},
  author={von Stengel, Bernhard},
  journal={arXiv preprint arXiv:1211.2405},
  year={2012}
}

@inproceedings{garg2011bilinear,
  title={Bilinear Games: Polynomial Time Algorithms for Rank Based Subclasses},
  author={Garg, Jugal and Jiang, Albert Xin and Mehta, Ruta},
  booktitle={Internet and Network Economics},
  series={Lecture Notes in Computer Science},
  volume={7090},
  pages={399--407},
  publisher={Springer},
  year={2011},
  doi={10.1007/978-3-642-25510-6_35}
}

@article{mehta2018constant,
  title={Constant Rank Two-Player Games Are {PPAD}-Hard},
  author={Mehta, Ruta},
  journal={SIAM Journal on Computing},
  volume={47},
  number={5},
  pages={1858--1887},
  year={2018},
  doi={10.1137/15M1032338}
}

@inproceedings{anagnostides2022lastiterate,
  title={On Last-Iterate Convergence Beyond Zero-Sum Games},
  author={Anagnostides, Ioannis and Panageas, Ioannis and Farina, Gabriele and Sandholm, Tuomas},
  booktitle={Proceedings of the 39th International Conference on Machine Learning},
  series={Proceedings of Machine Learning Research},
  volume={162},
  pages={536--581},
  year={2022}
}

@inproceedings{anagnostides2022omd,
  title={Optimistic Mirror Descent Either Converges to Nash or to Strong Coarse Correlated Equilibria in Bimatrix Games},
  author={Anagnostides, Ioannis and Farina, Gabriele and Panageas, Ioannis and Sandholm, Tuomas},
  booktitle={Advances in Neural Information Processing Systems},
  volume={35},
  pages={16439--16454},
  year={2022}
}

\clearpage
\appendix
\section{Appendix}

\subsection{Strategic curvature and adjacent swaps}\label{app:curvature}

The exterior derivative of the one-form in \cref{eq:oneform} is
\begin{equation}
    \dd\omega
    =\sum_{1\le i<j\le n}\Omega_{ij}(z)\,\dd z_i\wedge\dd z_j,
    \qquad
    \Omega_{ij}(z)=u_j\partial_i g(z)-u_i\partial_j g(z).
    \label{eq:omegaij}
\end{equation}
We call $\Omega$ the \emph{strategic curvature}. In matrix form,
\begin{equation}
    \Omega(z)=\nabla g(z)u^\top-u\nabla g(z)^\top.
    \label{eq:curvature-matrix}
\end{equation}

\begin{proposition}[Strategic curvature]
\label{prop:curvature}
For $y=y(z)$,
\begin{equation}
    \nabla g(z)
    =\eta B\bigl(\diag(y)-yy^\top\bigr)v.
    \label{eq:gradg}
\end{equation}
Consequently, $\rank(\Omega(z))\le2$ for every $z$.
\end{proposition}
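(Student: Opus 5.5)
The plan is to compute $\nabla g$ by the chain rule through the affine map $z\mapsto h=h^0+B^\top z$ and the softmax Jacobian, and then read off the rank bound directly from the matrix form \cref{eq:curvature-matrix}.

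\emph{Softmax Jacobian.} Write $y=\smx{h}$, so $y_j=\e^{\eta h_j}/\sum_k\e^{\eta h_k}$. Differentiating gives
\[
\frac{\partial y_j}{\partial h_k}=\eta\,(\delta_{jk}y_j-y_jy_k).
\]
Thus the Jacobian is $\eta(\diag(y)-yy^\top)$, which is symmetric. Equivalently, this is the Hessian of $\Phi_\eta$, since $\nabla\Phi_\eta=\softmax_\eta$.

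\emph{Gradient of $g$.} Since $g(z)=v^\top y(h^0+B^\top z)$ and $\partial h/\partial z=B^\top$, the chain rule gives
\[
\nabla_z g=B\,\bigl(\partial y/\partial h\bigr)^\top v=\eta B(\diag(y)-yy^\top)v,
\]
where the transpose is harmless by symmetry. This is \cref{eq:gradg}.

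\emph{Rank bound.} Set $a=\nabla g(z)\in\R^n$. Then $\Omega(z)=au^\top-ua^\top$ is a sum of two rank-at-most-one matrices, so by subadditivity of rank $\rank\Omega(z)\le2$. Alternatively, the column space of $\Omega(z)$ lies in $\operatorname{span}\{a,u\}$. It remains to confirm that \cref{eq:curvature-matrix} agrees with \cref{eq:omegaij}. The $(i,j)$ entry of $au^\top-ua^\top$ is $a_iu_j-u_ia_j=u_j\partial_ig-u_i\partial_jg=\Omega_{ij}$, which matches.

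I expect no real obstacle. The only point needing care is the transpose and index bookkeeping in the chain rule: $B$ is $n\times m$ and $\nabla g$ must lie in $\R^n$. The symmetry of the softmax Jacobian removes any ambiguity there.
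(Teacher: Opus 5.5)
Your proposal is correct and follows essentially the same route as the paper: the softmax Jacobian $\eta(\diag(y)-yy^\top)$, the chain rule through $h=h^0+B^\top z$, and the observation that $\Omega(z)=\nabla g(z)u^\top-u\nabla g(z)^\top$ is a difference of two rank-one matrices. Your explicit checks of the transpose (via symmetry of the Jacobian) and of the agreement between the matrix and entrywise forms of $\Omega$ are small additions the paper leaves implicit.
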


\begin{proof}
The Jacobian of the softmax map at $h$ is
\[
    \eta\bigl(\diag(y)-yy^\top\bigr).
\]
Applying the chain rule to $g(z)=v^\top y(z)$ gives \cref{eq:gradg}. Equation \eqref{eq:curvature-matrix} is the difference of two rank-one matrices, so its rank is at most two.
\end{proof}

The curvature has a direct operational meaning. Fix a cumulative state $z\in\mathcal D_T$ and two optimizer actions $i,j$. For durations $a,b\ge0$, compare the schedules that play $i$ for time $a$ then $j$ for time $b$, and the reverse order. Their terminal cumulative play is identical, so their endpoint terms cancel.

\begin{proposition}[Adjacent-swap identity]
\label{prop:swap}
Let $\Delta_{ij}^{a,b}(z)$ denote the total reward from playing $i$ for duration $a$ and then $j$ for duration $b$, minus the reward from the reverse order, starting from cumulative state $z$ and followed, if desired, by the same continuation. Whenever the corresponding rectangle lies in $\mathcal D_T$,
\begin{equation}
\boxed{
    \Delta_{ij}^{a,b}(z)
    =\int_0^a\int_0^b
      \Omega_{ij}(z+s e_i+t e_j)\dd t\dd s.}
\label{eq:swap}
\end{equation}
\end{proposition}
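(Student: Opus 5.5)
The plan is to use the path form \cref{eq:path-form} together with Stokes' theorem on the rectangle. First, both schedules start at the same cumulative state $z$ and end at the same state $z+ae_i+be_j$. So the endpoint term $\Phi_\eta(h^0)-\Phi_\eta(h^0+B^\top z_x(T))$ is identical for the two, once any common continuation is appended. The continuation also contributes identically, because after time $a+b$ both trajectories sit at the same cumulative state and the learner state depends only on $z$. Likewise any common prefix contributes identically. Hence $\Delta_{ij}^{a,b}(z)$ equals the difference of the line integrals of $\omega=g(z)u^\top\dd z$ along the two monotone paths: $\gamma_1$ goes $z\to z+ae_i\to z+ae_i+be_j$, and $\gamma_2$ goes $z\to z+be_j\to z+ae_i+be_j$.

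Next, I compute the closed-loop integral directly rather than invoking Stokes abstractly; this avoids orientation bookkeeping. Parametrize by $s\in[0,a]$ along $e_i$ and $t\in[0,b]$ along $e_j$. Only the coordinates $i,j$ move, so
\[
\int_{\gamma_1}\omega-\int_{\gamma_2}\omega
=\int_0^a u_i\bigl[g(z+se_i)-g(z+se_i+be_j)\bigr]\dd s
-\int_0^b u_j\bigl[g(z+te_j)-g(z+ae_i+te_j)\bigr]\dd t.
\]
Then apply the fundamental theorem of calculus in the transverse variable. We have $g(z+se_i+be_j)-g(z+se_i)=\int_0^b\partial_jg(z+se_i+te_j)\dd t$, and $g(z+ae_i+te_j)-g(z+te_j)=\int_0^a\partial_ig(z+se_i+te_j)\dd s$. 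Substituting and using Fubini, which is justified because $g$ is smooth (softmax composed with an affine map), gives
\[
\int_0^a\!\!\int_0^b\bigl(u_j\partial_ig-u_i\partial_jg\bigr)(z+se_i+te_j)\dd t\dd s,
\]
which is exactly $\Omega_{ij}$ from \cref{eq:omegaij}.

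The main point needing care is the sign and orientation convention: which order counts as positive, and that $\Omega_{ij}=u_j\partial_ig-u_i\partial_jg$ rather than its negative. I will check this on the small case by expanding to first order in $a,b$. The other requirement is that the whole rectangle lie in $\mathcal D_T$, so that $g$ is evaluated only at admissible states. That is precisely the hypothesis stated, and smoothness of $g$ on all of $\R^n$ makes the computation valid regardless.
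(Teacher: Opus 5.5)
Your proof is correct and takes the same route as the paper. Both orders reach the same cumulative state, so the $\Phi_\eta$ endpoint terms and any common prefix or continuation cancel, and what remains is the circulation of $\omega=g(z)u^\top\dd z$ around the rectangle. The paper cites Green's theorem for that last step; your fundamental-theorem-of-calculus and Fubini computation is exactly that theorem's proof on a rectangle. It also already fixes the orientation, confirming that ``$i$ then $j$ minus $j$ then $i$'' gives $+\Omega_{ij}$ with $\Omega_{ij}=u_j\partial_ig-u_i\partial_jg$, so your planned first-order sign check is unnecessary.
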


\begin{proof}
Both orders reach the same terminal cumulative state, so the $\Phi_\eta$ endpoint terms in \cref{eq:path-form} cancel. The remaining difference is the circulation of the one-form \eqref{eq:oneform} around the oriented rectangle with sides $ae_i$ and $be_j$. Green's theorem yields \cref{eq:swap}.
\end{proof}

Thus $\Omega_{ij}>0$ means that, locally and holding occupation times fixed, action $i$ should precede action $j$.

\subsection{Path-independent rank-one games}\label{app:pathind}

\begin{theorem}[Path-independent rank-one games]
\label{thm:pathind}
Assume $u\ne0$. On the convex domain $\mathcal D_T$, the following are equivalent:
\begin{enumerate}[label=(\roman*)]
    \item the line integral $\int_\gamma g(z)u^\top\dd z$ depends only on the endpoints of $\gamma$;
    \item $\Omega(z)=0$ for all $z$ in the relative interior of $\mathcal D_T$;
    \item $\nabla g(z)\in\operatorname{span}(u)$ for all such $z$;
    \item there is a scalar function $\bar g$ on $u^\top\mathcal D_T$ such that $g(z)=\bar g(u^\top z)$.
\end{enumerate}
If these conditions hold and $G'=\bar g$, then
\begin{equation}
\boxed{
    \Rstar(h^0,T,A,B)
    =\max_{q\in T\Delta_n}
      \left\{
      \Phi_\eta(h^0)-\Phi_\eta(h^0+B^\top q)
      +G(u^\top q)-G(0)
      \right\}.}
    \label{eq:pathind-value}
\end{equation}
In particular, an optimal strategy can be chosen constant, $x(t)=q/T$.
\end{theorem}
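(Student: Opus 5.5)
We prove the equivalences in a cycle and then deduce the value formula from \cref{eq:path-form}. Throughout, $g$ is smooth: it is a composition of the affine map $z\mapsto h^0+B^\top z$, the softmax, and a linear functional. Its gradient is given by \cref{prop:curvature}.

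\emph{(i)$\Leftrightarrow$(ii).} The domain $\mathcal D_T$ is convex, hence simply connected, and the one-form $\omega=g\,u^\top\dd z$ is smooth. By the Poincaré lemma, $\omega$ is exact exactly when it is closed, i.e.\ when $\dd\omega=0$. By \cref{eq:omegaij} this means $\Omega\equiv0$. Exactness gives endpoint dependence directly. For the converse we use the adjacent-swap identity \cref{prop:swap}. If $\Omega_{ij}(z_0)\ne0$ at some relative-interior point, then continuity makes it nonzero on a small rectangle around $z_0$ in the $(e_i,e_j)$ directions. The integral in \cref{eq:swap} is then nonzero, so the two orderings give different line integrals, contradicting (i). A rectangle based at $z_0$ fits in $\mathcal D_T$ because $z_0$ is interior. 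Continuity then extends $\Omega=0$ to the closure.

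\emph{(ii)$\Leftrightarrow$(iii).} Write $\Omega=\nabla g\,u^\top-u\,\nabla g^\top$. If $\nabla g=\lambda u$, then $\Omega=\lambda(uu^\top-uu^\top)=0$. Conversely, suppose $\Omega=0$ and pick a coordinate $j$ with $u_j\ne0$. Applying $\Omega$ to $e_j$ gives $u_j\nabla g=(\partial_j g)\,u$, so $\nabla g\in\operatorname{span}(u)$.

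\emph{(iii)$\Leftrightarrow$(iv).} If $g=\bar g(u^\top z)$, the chain rule gives $\nabla g=\bar g'(u^\top z)\,u$. For this direction we need $\bar g$ differentiable; it is, because $\bar g(s)=g(z_0+s\,u/\|u\|^2)$ along any line transversal to the level sets. For the other direction, assume (iii). Then the derivative of $g$ along any direction $w\perp u$ vanishes. Each fiber $\{z\in\mathcal D_T:u^\top z=s\}$ is the intersection of a convex set with a hyperplane, hence convex and in particular connected by segments. Integrating the vanishing directional derivative along such segments shows $g$ is constant on every fiber, which is exactly (iv).

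The main obstacle is the boundary of $\mathcal D_T$. The gradient conditions are only asserted on the relative interior, so the segment argument on fibers must be run in the interior and then extended by continuity of $g$. Since $\mathcal D_T$ is full-dimensional in $\R^n$, its relative interior is its interior, and every fiber through the interior is the closure of its interior part. This makes the continuity extension work.

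\emph{Value formula.} Let $G$ be an antiderivative of $\bar g$ on the interval $u^\top\mathcal D_T$. Then $\omega=\dd\bigl(G(u^\top z)\bigr)$, so for any admissible $x$ the line integral in \cref{eq:path-form} equals $G(u^\top z_x(T))-G(0)$. For absolutely continuous paths this follows from the fundamental theorem of calculus, since $z_x$ is Lipschitz. The reward therefore depends only on $q=z_x(T)\in T\Delta_n$. Every such $q$ is reached by the constant strategy $x\equiv q/T$. Hence the supremum over strategies equals the supremum over $q\in T\Delta_n$. This is a maximum because the objective is continuous and $T\Delta_n$ is compact, which proves \cref{eq:pathind-value} and shows that constant play is optimal.
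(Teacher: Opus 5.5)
Your proof is correct and follows the paper's route: closedness versus exactness of $\omega$ on the convex domain, the algebraic reduction of $\Omega=0$ to $\nabla g\parallel u$, constancy of $g$ on the convex slices $\{u^\top z=s\}$, and an antiderivative $G$ of $\bar g$ that gives an endpoint value attained by the constant control $q/T$. You add details the paper leaves implicit (the swap identity for (i)$\Rightarrow$(ii), the fundamental theorem of calculus along Lipschitz paths, attainment of the maximum), with one small gap: the extreme slices of $u^\top z$ can lie entirely in $\partial\mathcal D_T$ (e.g.\ $T\Delta_n$ when $u=\one$), so your ``closure of the interior part'' argument does not reach them, but since $\nabla g$ is continuous on all of $\R^n$, (iii) extends to all of $\mathcal D_T$ and your segment argument then applies to every slice directly.
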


\begin{proof}
On a convex domain, a continuously differentiable one-form is path-independent if and only if it is closed. The closedness condition is precisely $\Omega_{ij}=0$ for all $i,j$. If $u\ne0$, this is equivalent to $\nabla g$ being parallel to $u$. Hence $g$ is constant along every connected slice orthogonal to $u$, giving $g(z)=\bar g(u^\top z)$. The converse is immediate. The line integral then equals $G(u^\top q)-G(0)$, and every $q\in T\Delta_n$ is reached by the constant control $q/T$.
\end{proof}

Two useful algebraic subclasses follow.

\begin{corollary}[Column-independent rank-one component]
\label{cor:vconstant}
If $v=c\one$ for some $c\in\R$, then $g(z)=c$, and
\begin{equation}
    \Rstar(h^0,T,A,B)
    =\max_{q\in T\Delta_n}
      \left\{
      \Phi_\eta(h^0)-\Phi_\eta(h^0+B^\top q)+c\,u^\top q
      \right\}.
    \label{eq:vconstant}
\end{equation}
The objective is concave in $q$, so an additive approximation can be computed by standard convex optimization.
\end{corollary}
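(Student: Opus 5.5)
The plan is to treat \cref{cor:vconstant} as a direct special case of \cref{thm:pathind}, followed by a short concavity check.

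First, compute $g$. If $v=c\one$, then for every $z$,
\[
    g(z)=v^\top y(z)=c\,\one^\top \softmax_\eta(h^0+B^\top z)=c,
\]
since the softmax output lies in the simplex. Thus $g$ is constant, and $\nabla g\equiv0\in\operatorname{span}(u)$. Condition (iii) of \cref{thm:pathind} therefore holds whenever $u\ne0$, and (iv) holds with $\bar g\equiv c$. We take the antiderivative $G(s)=cs$, which gives $G(u^\top q)-G(0)=c\,u^\top q$. Substituting into \cref{eq:pathind-value} yields \cref{eq:vconstant}, with the maximum attained by the constant control $x(t)=q/T$.

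The degenerate case $u=0$ is excluded by the theorem's hypothesis, so we handle it directly. There $A+B=0$, the rank-one term vanishes, and \cref{prop:decomposition} reduces the reward to the endpoint term alone, which agrees with \cref{eq:vconstant} with $u^\top q=0$. Alternatively, one can bypass the theorem entirely: the line integral $\int_0^T c\,u^\top x(t)\dd t=c\,u^\top z_x(T)$ depends only on the endpoint. Consequently the reward depends only on $q=z_x(T)\in T\Delta_n$, and every such $q$ is reached by a constant control.

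For concavity, $\Phi_\eta$ is a scaled log-sum-exp function, hence convex. Its composition with the affine map $q\mapsto h^0+B^\top q$ is convex, so $-\Phi_\eta(h^0+B^\top q)$ is concave, and adding the linear term $c\,u^\top q$ preserves concavity. The feasible set $T\Delta_n$ is a polytope, and the objective is smooth with gradient $-B\,\softmax_\eta(h^0+B^\top q)+cu$. This gradient is Lipschitz with constant $O(\eta\|B\|^2)$ because the softmax Jacobian is bounded. Hence projected gradient ascent, or an ellipsoid or interior-point method, gives an additive $\epsilon$-approximation in time polynomial in the input size and $\log(1/\epsilon)$, or in $1/\epsilon$ for first-order methods.

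No step here is a real obstacle. The only point needing care is the $u=0$ case, and the direct endpoint argument above settles it.
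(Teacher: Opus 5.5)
Your proposal is correct and follows the route the paper intends. Since $v=c\one$ gives $g\equiv c$, \cref{thm:pathind} applies with $\bar g\equiv c$ and $G(s)=cs$, and concavity follows from convexity of the log-sum-exp $\Phi_\eta$ composed with an affine map. Your separate handling of $u=0$, through the direct identity $\int_0^T c\,u^\top x(t)\dd t=c\,u^\top z_x(T)$, usefully fills a case the paper leaves implicit, since \cref{thm:pathind} assumes $u\ne0$.
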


\begin{corollary}[One-dimensional learner state]
\label{cor:onedim}
Let $P=I_m-\frac1m\one\one^\top$ be the centering projection. If
\[
    BP=uw^\top
\]
for some $w\in\R^m$, then $g(z)$ depends only on $u^\top z$, and \cref{thm:pathind} applies.
\end{corollary}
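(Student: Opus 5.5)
The plan is to show that, under $BP=uw^\top$, the scalar field $g(z)=v^\top y(z)$ depends on $z$ only through $u^\top z$. Then condition (iv) of \cref{thm:pathind} holds, and the theorem applies directly. The argument rests on the shift-invariance of the softmax map: for every $c\in\R$, $\smx{h+c\one}=\smx{h}$. It therefore suffices to show that the learner logits $h^0+B^\top z$, taken modulo the span of $\one$, depend only on $u^\top z$.

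The first step is to apply the centering projection. Since $P$ is symmetric with $P\one=0$, we have $Ph=h-\frac{1}{m}(\one^\top h)\one$. Hence $\smx{h}=\smx{Ph}$ for every $h$, because the two arguments differ by a multiple of $\one$. Applying this to $h=h^0+B^\top z$ gives
\[
    y(z)=\smx{Ph^0+PB^\top z}.
\]
The second step uses the hypothesis. Transposing $BP=uw^\top$ gives $PB^\top=wu^\top$. Therefore $PB^\top z=(u^\top z)\,w$, and
\[
    y(z)=\smx{Ph^0+(u^\top z)\,w}.
\]
If we define $\bar g(s)=v^\top\smx{Ph^0+s\,w}$ for $s\in u^\top\mathcal D_T$, this shows $g(z)=\bar g(u^\top z)$.

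The last step is to invoke \cref{thm:pathind}. The case $u\neq0$ is covered by its equivalence (iv)$\Rightarrow$(i). When $u=0$, the one-form $g(z)u^\top\dd z$ vanishes identically, so path independence holds trivially.

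There is no substantial obstacle here. The only points to handle carefully are the transpose identity, which relies on $P^\top=P$, and the degenerate case $u=0$, which the theorem's hypothesis excludes. If one also wants the explicit value formula, note that $\bar g$ is smooth in $s$, so an antiderivative $G$ with $G'=\bar g$ exists and \eqref{eq:pathind-value} holds.
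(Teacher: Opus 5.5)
Your proposal is correct and follows essentially the same route as the paper: softmax invariance under shifts by multiples of $\one$, then $P(h^0+B^\top z)=Ph^0+(BP)^\top z=Ph^0+w(u^\top z)$. Your separate treatment of the degenerate case $u=0$ is a harmless addition that the paper leaves implicit.
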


\begin{proof}
Softmax is invariant to adding a multiple of $\one$. The centered learner state is
\[
    P(h^0+B^\top z)=Ph^0+(BP)^\top z=Ph^0+w(u^\top z),
\]
which depends on $z$ only through $u^\top z$.
\end{proof}

\subsection{Two-action learners and block approximation}
\label{app:two-actions}

Suppose the learner has $m=2$ actions. Define
\begin{equation}
    d_i=B_{i1}-B_{i2},
    \qquad
    \delta=v_1-v_2,
    \qquad
    r(z)=h^0_1-h^0_2+d^\top z.
    \label{eq:two-defs}
\end{equation}
Writing $\sigma(s)=(1+\e^{-s})^{-1}$, we have
\[
    y_1(z)=\sigma(\eta r(z)),
    \qquad
    y_2(z)=1-y_1(z),
\]
and
\begin{equation}
    g(z)=v_2+\delta\sigma(\eta r(z)).
\end{equation}
Therefore
\begin{equation}
\boxed{
    \Omega_{ij}(z)
    =\eta\delta\,\sigma(\eta r(z))\bigl(1-\sigma(\eta r(z))\bigr)
      (u_jd_i-u_id_j).}
    \label{eq:two-curvature}
\end{equation}
Its sign is independent of the learner state.

\begin{lemma}[Pure block approximation and attainment]
\label{lem:block-approx}
Let $x:[0,T]\to\Delta_n$ be an admissible control and let
\[
    \tau_i=\int_0^T x_i(t)\dd t.
\]
There is a sequence of pure, piecewise-constant controls $x^{(k)}$ with the same occupation times $(\tau_i)_{i=1}^n$ such that
\[
    \sup_{t\in[0,T]}\|z_{x^{(k)}}(t)-z_x(t)\|_1\longrightarrow0
    \quad\text{and}\quad
    \Rcont(x^{(k)},h^0,T,A,B)\longrightarrow\Rcont(x,h^0,T,A,B).
\]
Moreover, for every fixed ordering $\pi$ of the optimizer actions, the reward $J_\pi(\tau)$ of the contiguous block schedule with block lengths $\tau\in T\Delta_n$ is continuous in $\tau$. Hence $J_\pi$ attains a maximum on $T\Delta_n$.
\end{lemma}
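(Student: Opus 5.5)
The plan has three steps: build chattering block schedules that mimic $x$, prove convergence of the reward via a Lipschitz estimate, and then prove continuity of $J_\pi$ by the same estimate.

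\textbf{Construction.} Partition $[0,T]$ into $k$ intervals $I_\ell=[\ell T/k,(\ell+1)T/k)$. On each $I_\ell$, compute the local occupation times $a_{\ell,i}=\int_{I_\ell}x_i(s)\dd s$, which sum to $T/k$. Replace $x$ on $I_\ell$ by the pure schedule that plays action $1$ for time $a_{\ell,1}$, then action $2$ for time $a_{\ell,2}$, and so on. This gives a pure, piecewise-constant control $x^{(k)}$ with total occupation times exactly $\tau_i$.

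At every grid point we have $z_{x^{(k)}}(\ell T/k)=z_x(\ell T/k)$. Within an interval, both paths move by $\ell_1$-length at most $T/k$. Hence
\[
\sup_t\|z_{x^{(k)}}(t)-z_x(t)\|_1\le 2T/k.
\]

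\textbf{Reward convergence.} I would use the path form \eqref{eq:path-form} instead of \eqref{eq:rcont} directly. The reason is that $x^{(k)}$ does not converge to $x$ pointwise, only weakly, so the integrand $x^\top Ay_x$ cannot be compared pointwise.

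\begin{itemize}
\item The endpoint terms coincide, since $z_{x^{(k)}}(T)=z_x(T)$.
\item The remaining term is $\int_0^T g(z(t))\,u^\top x(t)\dd t$. Split it as
\[
\int_0^T \bigl[g(z_{x^{(k)}})-g(z_x)\bigr]u^\top x^{(k)}\dd t+\int_0^T g(z_x)\,u^\top\bigl(x^{(k)}-x\bigr)\dd t.
\]
\end{itemize}

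For the first piece, note that $g$ is Lipschitz on the compact set $\mathcal D_T$. By \cref{prop:curvature}, $\|\nabla g\|\le \eta\|B\|\|v\|$. So this piece is $O(T\|u\|_\infty/k)$.

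The second piece is the main obstacle, because it is a weak-convergence statement. I would handle it by freezing $g(z_x(t))$ at the left grid point of each interval. The resulting error is again $O(\mathrm{Lip}(g)\,\|u\|_\infty\,T/k)$ per unit time, since $z_x$ moves by at most $T/k$ within an interval. After freezing, the integral of $u^\top(x^{(k)}-x)$ over each $I_\ell$ vanishes exactly, because the two controls have identical occupation times there. Summing over intervals gives a total error of $O(1/k)\to0$.

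\textbf{Continuity of $J_\pi$.} Fix an ordering $\pi$ and block lengths $\tau,\tau'\in T\Delta_n$.

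\begin{itemize}
\item The two block paths $z_\tau(t)$ and $z_{\tau'}(t)$ satisfy $\sup_t\|z_\tau(t)-z_{\tau'}(t)\|_1\le C\|\tau-\tau'\|_1$. This holds because block boundaries are partial sums of $\tau$, so each coordinate is a clipped piecewise-linear function of $t$ and $\tau$.
\item The endpoint term $\Phi_\eta(h^0+B^\top\tau)$ is continuous in $\tau$.
\item For the integral term, I would reparametrize along the path rather than in time. In the $\pi(j)$-th block, the contribution is $u_{\pi(j)}\int g\dd z_{\pi(j)}$, taken along a segment whose endpoints depend continuously on $\tau$. Each such integral is continuous in $\tau$ by uniform continuity of $g$.
\end{itemize}

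Alternatively, one can apply the same Lipschitz-plus-pairing estimate as in the convergence step, bounding the mismatch of the controls by the total boundary shift $O(n\|\tau-\tau'\|_1)$.

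Since $J_\pi$ is continuous on the compact simplex $T\Delta_n$, the maximum is attained by Weierstrass.
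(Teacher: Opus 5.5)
Your proof is correct. The construction is the paper's: per-interval purification that preserves local occupation times, paths agreeing at grid points, and the uniform $\ell_1$ bound $2T/k$.

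The reward step takes a different route. The paper never passes to the path form \eqref{eq:path-form}. Instead, it freezes the learner strategy $y$ at its value at the left endpoint of each interval. That value is common to both controls, because their cumulative paths coincide there. Hence $\int_I x^\top A\,y_{\mathrm{frozen}}\dd t$ depends only on $\int_I x\dd t$ and is identical for $x$ and $x^{(k)}$. Unfreezing costs $O(|I|^2)$ per control by Lipschitzness of $y$, for $O(1/k)$ in total.

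So your reason for leaving the raw integrand (that $x^{(k)}$ converges only weakly) is already handled by the same freezing trick you apply to your second piece. The paper's argument works for arbitrary $(A,B)$. Yours uses $A+B=uv^\top$, which is available in that appendix but unnecessary, and it needs an extra Lipschitz term for $g(z_{x^{(k)}})-g(z_x)$.

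Your route does buy something. It gives explicit constants via \cref{prop:curvature}. For continuity of $J_\pi$, it writes the path-dependent term as a sum of one-dimensional integrals $u_{\pi(j)}\int_0^{\tau_{\pi(j)}} g(\cdot)\dd s$ with continuously moving base points, which yields an explicit, even Lipschitz, modulus of continuity. The paper instead gets continuity of $J_\pi$ by dominated convergence: the block control converges pointwise away from finitely many breakpoints while the integrand stays bounded.
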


\begin{proof}
Partition $[0,T]$ into intervals of mesh at most $1/k$. On each interval $I$, replace $x$ by a pure schedule that plays action $i$ for exactly $\int_I x_i(t)\dd t$ units of time, in any fixed local order. The original and purified cumulative paths agree at every partition endpoint, and inside an interval their $\ell_1$ distance is at most twice the interval length. This gives uniform convergence of the cumulative paths and preserves every total occupation time exactly.

The map $z\mapsto y(z)=\smx{h^0+B^\top z}$ is Lipschitz on the compact domain $\mathcal D_T$. On each partition interval, freeze the learner strategy at its value at the left endpoint. The original and purified controls have the same integrated action vector on that interval, so their rewards against the frozen learner strategy are equal. The error from unfreezing is $O(|I|^2)$ for each control, uniformly over intervals. Summing gives an $O(1/k)$ reward difference and proves convergence.

For a contiguous block schedule in a fixed order, its cumulative path varies uniformly with the block-length vector, apart from shifts of finitely many breakpoints; the integrand is bounded and continuous in the cumulative state. Dominated convergence therefore gives continuity of $J_\pi$. The simplex $T\Delta_n$ is compact, so the maximum is attained.
\end{proof}

\begin{theorem}[Global ordering for a two-action learner]
\label{thm:sorting}
Assume $m=2$ and that the coordinates of $u$ have a common nonzero sign. Normalize the factorization $A+B=uv^\top$ so that $u_i>0$ for every optimizer action. If $\delta>0$, then for every fixed vector of occupation times $(\tau_i)_{i=1}^n$, the contiguous block schedule that plays the used actions in nonincreasing order of
\begin{equation}
    \rho_i=\frac{d_i}{u_i}.
    \label{eq:rho}
\end{equation}
is optimal among all admissible controls with those occupation times. If $\delta<0$, the order is reversed; if $\delta=0$, the ordering is immaterial. Consequently, there is a globally optimal strategy in which every used optimizer action appears in one contiguous block and the total number of switches is at most $n-1$.
\end{theorem}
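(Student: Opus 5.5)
Since all controls with the same occupation times share the terminal cumulative state $q=(\tau_i)$, the endpoint term in \cref{eq:path-form} is fixed. It therefore suffices to maximize the line integral $\int_{\gamma_x} g(z)u^\top\dd z$ over controls with these occupation times. By \cref{lem:block-approx}, every admissible control is a limit (in reward) of pure piecewise-constant controls with the same occupation times. So I will prove the claimed ordering is optimal among pure piecewise-constant schedules, and then pass to the limit.

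\textbf{Exchange argument.} Take a pure piecewise-constant schedule, given as a finite sequence of pieces $(i_1,a_1),(i_2,a_2),\dots$. Suppose two adjacent pieces $(i,a)$ then $(j,b)$ are \emph{out of order}, meaning $\delta>0$ and $\rho_i<\rho_j$. By \cref{prop:swap}, swapping them to $(j,b),(i,a)$ changes reward by $-\Delta_{ij}^{a,b}(z)$, where $z$ is the cumulative state before the pair. Everything before and after the pair is unchanged, because the continuation starts from the same cumulative state.

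\textbf{Sign of the swap gain.} By \cref{eq:two-curvature},
\[
\Omega_{ij}=\eta\delta\,\sigma(1-\sigma)\,u_iu_j(\rho_i-\rho_j),
\]
using $u_jd_i-u_id_j=u_iu_j(\rho_i-\rho_j)$ and $u_i,u_j>0$. Since $\sigma(1-\sigma)>0$, the sign of $\Omega_{ij}$ is state-independent and equals the sign of $\delta(\rho_i-\rho_j)$. So for an out-of-order pair, $\Omega_{ij}\le0$ everywhere on the rectangle, hence $\Delta_{ij}^{a,b}\le0$, and the swap weakly increases reward.

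\textbf{Bubble sort.} Repeated adjacent swaps of out-of-order pieces terminate after finitely many steps, as in bubble sort. Merging consecutive pieces of the same action changes nothing. Pieces of distinct actions with equal $\rho$ have $\Omega_{ij}\equiv0$, so they can be reordered freely. The result is a contiguous block schedule in nonincreasing $\rho$ order with the same occupation times, and its reward is at least that of the original schedule. For $\delta<0$ all signs flip, and for $\delta=0$ we have $\Omega\equiv0$.

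\textbf{Limit and global statement.} Let $J^\star(\tau)$ be the reward of the sorted block schedule with block lengths $\tau$. For a general control $x$ with occupation times $\tau$, take the approximants $x^{(k)}$ from \cref{lem:block-approx}; each satisfies $\Rcont(x^{(k)})\le J^\star(\tau)$, so letting $k\to\infty$ gives $\Rcont(x)\le J^\star(\tau)$. For the global claim, the order depends only on $(\rho_i)$ and $\operatorname{sign}\delta$, so there is a single ordering $\pi$. Then $\Rstar=\max_{\tau\in T\Delta_n}J_\pi(\tau)$, and this maximum is attained by the continuity part of \cref{lem:block-approx}. The resulting schedule uses each action in one contiguous block, so it has at most $n-1$ switches.

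\textbf{Main obstacle.} The delicate step is applying \cref{prop:swap}, which requires its rectangle to lie in $\mathcal D_T$. The rectangle spans $z+se_i+te_j$ with $0\le s\le a$, $0\le t\le b$, and total mass at most that of $z+ae_i+be_j$, which is at most $T$. So the rectangle does lie in $\mathcal D_T$. The remaining care is in the bookkeeping that equal-$\rho$ ties and merged pieces do not break termination.
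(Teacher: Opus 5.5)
Your proposal is correct and follows essentially the same route as the paper: purify to pure piecewise-constant schedules via \cref{lem:block-approx}, use the state-independent sign of $\Omega_{ij}$ from \cref{eq:two-curvature} together with \cref{prop:swap} to bubble-sort adjacent pieces, pass to the limit, and obtain the global statement from continuity of $J_\pi$ on the compact simplex. Your explicit checks (the identity $u_jd_i-u_id_j=u_iu_j(\rho_i-\rho_j)$, containment of the swap rectangle in $\mathcal D_T$, and tie-breaking for termination) fill in details the paper leaves implicit.
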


\begin{proof}
Suppose $\delta>0$ and fix occupation times $\tau$. By \cref{lem:block-approx}, any admissible control with occupation vector $\tau$ is the reward limit of finite pure schedules with exactly the same occupation times. If $\rho_i\ge\rho_j$, then $u_jd_i-u_id_j\ge0$, so \cref{eq:two-curvature,prop:swap} show that swapping any adjacent out-of-order occurrence of $j$ before $i$ cannot decrease reward. Repeated adjacent swaps transform each approximating schedule into the same canonical contiguous block schedule sorted by nonincreasing $\rho_i$. Passing to the limit proves that this block schedule is optimal for the fixed occupation vector.

Its reward is continuous in $\tau$ by \cref{lem:block-approx}; therefore it attains a maximum over the compact simplex $T\Delta_n$. This maximizer is globally optimal. The case $\delta<0$ is symmetric, and for $\delta=0$ all swap differences vanish.
\end{proof}

For two optimizer actions, no common-sign assumption is needed.

\begin{corollary}[One-switch structure in $2\times2$ games]
\label{cor:one-switch}
Suppose $n=m=2$, and define
\[
    \chi=\delta(u_2d_1-u_1d_2).
\]
There is an optimal strategy of the form
\[
    1\to2 \quad\text{if }\chi>0,
    \qquad
    2\to1 \quad\text{if }\chi<0.
\]
If $\chi=0$, the ordering is immaterial. Thus an optimal strategy switches at most once.
\end{corollary}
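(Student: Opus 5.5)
The plan is to reduce to \cref{thm:sorting} together with the adjacent-swap identity, but to bypass the common-sign hypothesis on $u$ by using that there are only two optimizer actions. By \cref{lem:block-approx}, any admissible control with occupation times $(\tau_1,\tau_2)$, $\tau_1+\tau_2=T$, is the reward limit of finite pure piecewise-constant schedules with exactly those occupation times. We will show that each such finite schedule is dominated by the contiguous two-block schedule in the order prescribed by the sign of $\chi$, and then optimize over the split $\tau_1$.

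First, specialize \cref{eq:two-curvature} to $n=2$. This gives
\[
\Omega_{12}(z)=\eta\delta\,\sigma(\eta r(z))\bigl(1-\sigma(\eta r(z))\bigr)(u_2d_1-u_1d_2).
\]
Its sign is that of $\chi$ at every state $z$, because the logistic factor is strictly positive. By \cref{prop:swap}, with $i=1$ and $j=2$, the gain from playing $1$ for time $a$ and then $2$ for time $b$, instead of the reverse order, is the integral of $\Omega_{12}$ over a rectangle. That integral is $\ge0$ if $\chi>0$, $\le0$ if $\chi<0$, and $0$ if $\chi=0$. The continuation after the swapped pair is unaffected, since both orders reach the same cumulative state. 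The earlier prefix is also unaffected, so the total reward changes by exactly this rectangle integral.

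Now take a finite pure schedule that alternates blocks of actions $1$ and $2$. If $\chi>0$, repeatedly swap any adjacent pair in which a $2$-block precedes a $1$-block. Each swap weakly increases reward, and the process terminates after finitely many steps at the canonical schedule: play $1$ for $\tau_1$, then $2$ for $\tau_2$. This canonical schedule depends only on $\tau$. Taking limits via \cref{lem:block-approx}, every control with occupation $\tau$ earns at most $J_{1\to2}(\tau)$. The case $\chi<0$ is symmetric with the order $2\to1$. If $\chi=0$, every swap has zero effect, so all orderings with the same occupation times give equal reward.

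Finally, $J_\pi$ is continuous on the compact set $T\Delta_2$ by \cref{lem:block-approx}, so it attains a maximum at some $\tau^*$. The block schedule with lengths $\tau^*$ in the prescribed order is therefore globally optimal, and it has at most one switch. It has none if some $\tau_i^*=0$.

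The only delicate point is the domain condition in \cref{prop:swap}: the swap rectangle must lie in $\mathcal D_T$. For two adjacent blocks of lengths $a,b$ starting at cumulative state $z$, the rectangle lies in $\{z+se_1+te_2: 0\le s\le a,\ 0\le t\le b\}$. This set is nonnegative and has total mass $\le \one^\top z+a+b\le T$, so the condition holds automatically.
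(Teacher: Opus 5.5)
Your proposal is correct and follows essentially the same route as the paper. Both arguments use the same four steps: the state-independent sign of $\Omega_{12}$ (equal to the sign of $\chi$) from \cref{eq:two-curvature}; the adjacent-swap identity applied to the pure schedules of \cref{lem:block-approx}; a limit argument for fixed occupation times; and continuity plus compactness to choose the switch time. Your explicit check that the swap rectangles lie in $\mathcal D_T$ is a small detail the paper leaves implicit.
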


\begin{proof}
For fixed occupation times $(\tau,T-\tau)$, \cref{eq:two-curvature} has the constant sign of $\chi$. The adjacent-swap identity and \cref{lem:block-approx} therefore show that the corresponding preferred block order is optimal among all controls with those occupation times. Its reward is continuous in $\tau\in[0,T]$, so a maximizing switch time exists. If $\chi=0$, every adjacent swap has zero value and either block order is optimal.
\end{proof}

For completeness, the remaining one-dimensional optimization can be written explicitly. Let
\[
    L_\eta(r)=\frac1\eta\log(1+\e^{\eta r})
\]
and define
\begin{equation}
    \Psi_\eta(r,d,s)
    =\begin{cases}
      \dfrac{L_\eta(r+ds)-L_\eta(r)}{d},&d\ne0,\\[2mm]
      s\,\sigma(\eta r),&d=0.
    \end{cases}
    \label{eq:Psi}
\end{equation}
If pure row $i$ is played for duration $s$ starting from log-odds state $r$, the accumulated reward is
\begin{equation}
    \mathcal J_i(r,s)
    =A_{i2}s+(A_{i1}-A_{i2})\Psi_\eta(r,d_i,s).
    \label{eq:block-reward}
\end{equation}
Hence, if the preferred ordering is $i\to j$,
\begin{equation}
    \Rstar(h^0,T,A,B)
    =\max_{\tau\in[0,T]}
      \left\{
        \mathcal J_i(r_0,\tau)
        +\mathcal J_j(r_0+d_i\tau,T-\tau)
      \right\},
    \label{eq:2x2-value}
\end{equation}
where $r_0=h^0_1-h^0_2$.

\subsection{Train-then-harvest calculations}
\label{app:train-harvest}

We first justify the reduction of \cref{ex:train-harvest} to the scalar objective in \cref{eq:train-reward}. Let $a(t)=x_1(t)$ and
\[
    q(t)=\int_0^t a(s)\dd s.
\]
The learner's probability of column two is $\sigma(\eta q(t))$, and the optimizer's reward is
\begin{equation}
    R[a]=\int_0^T(1-a(t))\sigma(\eta q(t))\dd t.
    \label{eq:train-functional}
\end{equation}
For fixed $q(T)=\tau$,
\[
    R[a]
    =\int_0^T\sigma(\eta q(t))\dd t
     -\int_0^\tau\sigma(\eta s)\dd s.
\]
The second term is fixed. Since $0\le\dot q\le1$, $q$ is nondecreasing, and $q(T)=\tau$, we have $q(t)\le\min\{t,\tau\}$ for every $t$. The first term is therefore maximized by $q(t)=\min\{t,\tau\}$, which corresponds to training on row one for time $\tau$ and then harvesting on row two. Substitution gives \cref{eq:train-reward}.

\begin{proof}[Proof of \cref{prop:lambert}]
Differentiating \cref{eq:train-reward},
\[
    R'(\tau)
    =\sigma(\eta\tau)
      \left[-1+\eta(T-\tau)(1-\sigma(\eta\tau))\right].
\]
The bracketed term is strictly decreasing. At zero it equals $-1+\eta T/2$, giving the boundary case. In the interior, the first-order condition is
\[
    \eta(T-\tau)=1+\e^{\eta\tau}.
\]
Letting $w=\e^{\eta\tau}$ gives $w+\log w=\eta T-1$, hence $w=W(\e^{\eta T-1})$. At the optimum,
\[
    T-\tau^*=\frac{1+w}{\eta},
    \qquad
    \sigma(\eta\tau^*)=\frac{w}{1+w},
\]
so the value collapses to
\[
    R(\tau^*)
    =\frac{1+w}{\eta}\cdot\frac{w}{1+w}
    =\frac{w}{\eta}.
\]
This gives \cref{eq:lambert}.
\end{proof}

\begin{proof}[Proof of \cref{prop:linear-gap}]
For a constant strategy with $x_1(t)=p>0$,
\[
    R_{\mathrm{const}}(p)
    =\frac{1-p}{\eta p}
      \log\left(\frac{1+\e^{\eta pT}}2\right).
\]
Let $s=Lp$. Since
\[
    \log\left(\frac{1+\e^s}{2}\right)
    =\frac s2+\log\cosh\frac s2
    \le\frac s2+\frac{s^2}{8},
\]
we obtain, for $L\le4$,
\[
    \frac{R_{\mathrm{const}}(p)}T
    \le\left(1-\frac sL\right)\left(\frac12+\frac s8\right)
    \le\frac12.
\]
Equality is attained at $p=0$. The dynamic value follows from \cref{prop:lambert}, and strict positivity follows because $L>2$ gives an interior optimum.
\end{proof}

\subsection{Hamilton--Jacobi and Pontryagin formulations}
\label{app:control}

Let $V(\tau,h)$ be the optimal reward over a remaining horizon $\tau$ from learner state $h$. The Hamilton--Jacobi--Bellman equation is
\begin{equation}
    \partial_\tau V(\tau,h)
    =\max_{i\in[n]}
      \left\{(Ay(h))_i+(B\nabla_hV(\tau,h))_i\right\},
    \qquad
    V(0,h)=0.
    \label{eq:hjb}
\end{equation}
For a rank-one game, substitute $V(\tau,h)=\Phi_\eta(h)+U(\tau,h)$. Since $Ay=-By+u(v^\top y)$ and $\nabla\Phi_\eta=y$, the zero-sum terms cancel:
\begin{equation}
    \partial_\tau U(\tau,h)
    =\max_{i\in[n]}
      \left\{u_i v^\top y(h)+(B\nabla_hU(\tau,h))_i\right\},
    \qquad
    U(0,h)=-\Phi_\eta(h).
    \label{eq:rankone-hjb}
\end{equation}

In cumulative-play coordinates, omitting the constant $\Phi_\eta(h^0)$, the objective is
\[
    -\Phi_\eta(h^0+B^\top z(T))
    +\int_0^T(u^\top x(t))g(z(t))\dd t.
\]
The Pontryagin Hamiltonian is
\begin{equation}
    \mathcal H(z,p,x)=x^\top(p+ug(z)).
\end{equation}
The costate and terminal conditions are
\begin{equation}
    \dot p(t)=-(u^\top x(t))\nabla g(z(t)),
    \qquad
    p(T)=-By(T).
    \label{eq:costate}
\end{equation}
An optimal control is supported on
\begin{equation}
    \arg\max_i\{p_i(t)+u_i g(z(t))\}.
\end{equation}
The switching scores $s_i=p_i+u_ig$ satisfy
\begin{equation}
    \dot s_i(t)
    =\sum_jx_j(t)\left(u_i\partial_jg-u_j\partial_ig\right),
\end{equation}
showing again that switching is driven exactly by the curvature coefficients.

\subsection{A concave subclass for two-action learners}
\label{app:concave}

Assume $m=2$, $\delta=v_1-v_2>0$, and $u_i,d_i>0$. Relabel optimizer actions so that
\[
    \rho_1=\frac{d_1}{u_1}\ge\rho_2\ge\cdots\ge\rho_n.
\]
By \cref{thm:sorting}, an optimal strategy uses this block order. Let $\tau_i\ge0$ be block lengths summing to $T$, and define
\[
    r_k=r_0+\sum_{\ell=1}^k d_\ell\tau_\ell.
\]
Using $A_{i1}-A_{i2}=u_i\delta-d_i$, the objective is
\begin{align}
    J(\tau)
    ={}&L_\eta(r_0)-L_\eta(r_n)
       +\sum_{i=1}^nA_{i2}\tau_i\notag\\
      &+\delta\sum_{i=1}^n\frac{u_i}{d_i}
        \left[L_\eta(r_i)-L_\eta(r_{i-1})\right].
    \label{eq:concave-objective}
\end{align}
The coefficient of $L_\eta(r_i)$ for $i<n$ is
\[
    \delta\left(\frac1{\rho_i}-\frac1{\rho_{i+1}}\right)\le0,
\]
and the terminal coefficient is $\delta/\rho_n-1$. Therefore:

\begin{proposition}
If, in addition, $\rho_n\ge\delta$, then $J$ is concave on the simplex $\{\tau\ge0:\sum_i\tau_i=T\}$. Hence the optimal block lengths can be approximated in polynomial time by convex optimization.
\end{proposition}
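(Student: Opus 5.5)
The plan is to read concavity directly off the representation \cref{eq:concave-objective}. After regrouping, $J$ is a sum of an affine function of $\tau$ and nonpositive multiples of a convex function composed with affine maps. The polynomial-time claim then follows from standard concave maximization over a simplex, combined with \cref{thm:sorting}.

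\textbf{Step 1: confirm \cref{eq:concave-objective}.} I would first make sure the displayed formula is correct, since it is only sketched in the text. In the block order of \cref{thm:sorting}, block $i$ starts at log-odds state $r_{i-1}$ and ends at $r_i=r_{i-1}+d_i\tau_i$. By \cref{eq:block-reward,eq:Psi}, with $d_i>0$, its reward is
\[
A_{i2}\tau_i+\frac{A_{i1}-A_{i2}}{d_i}\bigl(L_\eta(r_i)-L_\eta(r_{i-1})\bigr).
\]
Substituting $A_{i1}-A_{i2}=u_i\delta-d_i$, which follows from $A=-B+uv^\top$, the $-d_i$ part gives $-(L_\eta(r_i)-L_\eta(r_{i-1}))$. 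These terms telescope to $L_\eta(r_0)-L_\eta(r_n)$. The remaining part is $\delta(u_i/d_i)\bigl(L_\eta(r_i)-L_\eta(r_{i-1})\bigr)$, which is exactly \cref{eq:concave-objective}.

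\textbf{Step 2: regroup by $L_\eta(r_i)$.} Using $u_i/d_i=1/\rho_i$, the coefficients are:
\begin{itemize}
\item $L_\eta(r_0)$ gets $1-\delta/\rho_1$. This term is irrelevant because $r_0$ does not depend on $\tau$.
\item $L_\eta(r_i)$ for $1\le i\le n-1$ gets $\delta(1/\rho_i-1/\rho_{i+1})$.
\item $L_\eta(r_n)$ gets $\delta/\rho_n-1$.
\end{itemize}
Since all $\rho_i>0$ and they are nonincreasing, $1/\rho_i\le 1/\rho_{i+1}$. With $\delta>0$, every middle coefficient is therefore $\le0$. The terminal coefficient is $\le0$ exactly when $\rho_n\ge\delta$, which is the added hypothesis.

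\textbf{Step 3: concavity.} The function $L_\eta(r)=\eta^{-1}\log(1+\e^{\eta r})$ is convex, with $L_\eta''=\eta\sigma(\eta r)(1-\sigma(\eta r))\ge0$. Each $r_i$ is an affine function of $\tau$. So each $L_\eta(r_i(\tau))$ is convex in $\tau$, and multiplying by a nonpositive coefficient makes it concave. Adding the linear term $\sum_iA_{i2}\tau_i$ and the constant keeps $J$ concave on all of $\R^n_{\ge0}$, hence on the simplex $\{\tau\ge0:\sum_i\tau_i=T\}$.

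\textbf{Step 4: algorithmic consequence.} By \cref{thm:sorting}, $\Rstar$ equals the maximum of $J$ over the simplex $T\Delta_n$. The function $J$ and its gradient are evaluable in closed form. Since $0\le L_\eta'\le1$, its Lipschitz constant is bounded polynomially in the entries of $A$, $B$, $u$, $v$. Hence projected subgradient ascent, or the ellipsoid method, over the simplex returns an $\epsilon$-additive approximation in time polynomial in the input size and $\log(1/\epsilon)$ (or $1/\epsilon$ for subgradient ascent).

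\textbf{Main obstacle.} I expect the main difficulty to be bookkeeping rather than conceptual. One issue is checking the sign and telescoping in Step 1, in particular that the $-d_i$ contributions produce exactly $-L_\eta(r_n)$ with the right coefficient. The other is stating the complexity claim carefully: inexact evaluation of $\log$ and $\exp$ must be handled to polynomial precision, which is routine for a Lipschitz concave objective on a polytope.
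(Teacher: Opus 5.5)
Your proposal is correct and follows essentially the same route as the paper: write the sorted block schedule's reward as $J(\tau)$, regroup by $L_\eta(r_i)$, observe that the middle coefficients $\delta(1/\rho_i-1/\rho_{i+1})$ are nonpositive and that the terminal coefficient $\delta/\rho_n-1$ is nonpositive exactly when $\rho_n\ge\delta$, and conclude concavity because $L_\eta$ is convex and each $r_i$ is affine in $\tau$. The paper's treatment is more compressed than yours. Your Step 1 verification of \cref{eq:concave-objective}, your explicit handling of the constant $L_\eta(r_0)$ term, and your Step 4 justification of the algorithmic claim add care but do not change the argument.
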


\subsection{A weak-learning ordering expansion}
\label{app:small-eta}

Assume $h^0=0$ and $\eta\|B\|T$ is small. Let
\[
    \bar y=\frac1m\one,
    \qquad
    C_0=\diag(\bar y)-\bar y\bar y^\top,
    \qquad
    w=BC_0v.
\]
A Taylor expansion of softmax gives, uniformly on $\mathcal D_T$,
\begin{equation}
    y(z)=\bar y+\eta C_0B^\top z+O(\eta^2\|B\|^2T^2),
\end{equation}
and therefore
\begin{equation}
    g(z)=v^\top\bar y+\eta w^\top z+O(\eta^2\|B\|^2\|v\|T^2).
\end{equation}
For a block schedule with fixed occupation times $q_i$ and order $\pi$, the first-order path-dependent term is
\begin{align}
    \int_\gamma g(z)u^\top\dd z
    ={}& (v^\top\bar y)u^\top q
       +\frac\eta2\sum_i u_iw_iq_i^2\notag\\
      &+\eta\sum_{r<s}w_{\pi_r}u_{\pi_s}q_{\pi_r}q_{\pi_s}
       +O\!\left(\eta^2\|B\|^2\|u\|\|v\|T^3\right).
    \label{eq:small-eta-order}
\end{align}
Thus the first-order benefit of placing $i$ before $j$ is
\[
    w_i u_j-w_j u_i.
\]
When all $u_i>0$, the first-order optimal ordering is nonincreasing in $w_i/u_i$. This generalizes the exact $m=2$ ordering rule at weak learning rates.

\subsection{Discrete-time MWU identity}
\label{app:discrete}

Consider the discrete update
\[
    h_{t+1}=h_t+B^\top x_t,
    \qquad
    y_t=\nabla\Phi_\eta(h_t).
\]
The Bregman identity gives
\[
    x_t^\top By_t
    =\Phi_\eta(h_{t+1})-\Phi_\eta(h_t)
     -D_{\Phi_\eta}(h_{t+1},h_t).
\]
For log-sum-exp,
\[
    D_{\Phi_\eta}(h_{t+1},h_t)
    =\frac1\eta\KL(y_t\|y_{t+1}).
\]
Hence, whenever $A+B=uv^\top$,
\begin{proposition}[Discrete rank-one decomposition]
\begin{equation}
\boxed{
\begin{aligned}
    \sum_{t=0}^{T-1}x_t^\top Ay_t
    ={}&\Phi_\eta(h_0)-\Phi_\eta(h_T)
       +\sum_{t=0}^{T-1}(u^\top x_t)(v^\top y_t)\\
      &+\frac1\eta\sum_{t=0}^{T-1}\KL(y_t\|y_{t+1}).
\end{aligned}}
\end{equation}
\end{proposition}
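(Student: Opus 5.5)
The plan is to mirror the continuous-time proof of \cref{prop:decomposition}, replacing the chain rule by an exact first-order Taylor expansion of $\Phi_\eta$ with its Bregman remainder. The argument splits into three steps.

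\textbf{Step 1: per-round splitting.} Using $A=-B+uv^\top$, each round's reward splits as
\[
    x_t^\top Ay_t=-x_t^\top By_t+(u^\top x_t)(v^\top y_t).
\]
The second term is already in the desired form. All remaining work concerns the zero-sum part $-x_t^\top By_t$.

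\textbf{Step 2: the Bregman identity.} By definition of the Bregman divergence of $\Phi_\eta$,
\[
    D_{\Phi_\eta}(h',h)=\Phi_\eta(h')-\Phi_\eta(h)-\nabla\Phi_\eta(h)^\top(h'-h).
\]
Apply this with $h=h_t$ and $h'=h_{t+1}$. Then $h_{t+1}-h_t=B^\top x_t$ and $\nabla\Phi_\eta(h_t)=y_t$, so
\[
    y_t^\top B^\top x_t=x_t^\top By_t=\Phi_\eta(h_{t+1})-\Phi_\eta(h_t)-D_{\Phi_\eta}(h_{t+1},h_t).
\]
This is the first displayed identity preceding the proposition.

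\textbf{Step 3: the KL formula.} Next I verify $D_{\Phi_\eta}(h',h)=\frac1\eta\KL(y\|y')$, where $y=\smx{h}$ and $y'=\smx{h'}$. Since $\log y_j=\eta h_j-\eta\Phi_\eta(h)$, and similarly for $y'$, we get
\[
    \KL(y\|y')=\sum_j y_j\bigl(\eta(h_j-h'_j)+\eta\Phi_\eta(h')-\eta\Phi_\eta(h)\bigr).
\]
Using $\one^\top y=1$, this equals $\eta\bigl[\Phi_\eta(h')-\Phi_\eta(h)-y^\top(h'-h)\bigr]=\eta\,D_{\Phi_\eta}(h',h)$.

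\textbf{Assembly.} Negate the identity from Step 2, substitute it into the splitting from Step 1, and sum over $t=0,\dots,T-1$. The $\Phi_\eta$ differences telescope to $\Phi_\eta(h_0)-\Phi_\eta(h_T)$. Each Bregman term enters with a plus sign and becomes $\frac1\eta\KL(y_t\|y_{t+1})$, which yields the boxed formula.

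There is no genuine obstacle here. The only place needing care is the sign bookkeeping: the Bregman remainder is subtracted in the expression for $x_t^\top By_t$, so it appears with a positive sign in the optimizer's reward. The other point to check is the direction of the KL divergence, namely $y_t$ against $y_{t+1}$. It is fixed by the fact that the linearization is taken at the current state $h_t$.
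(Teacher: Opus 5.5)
Your proposal is correct and follows the paper's own argument: split off the rank-one term using $A=-B+uv^\top$, apply the first-order Bregman identity for $\Phi_\eta$ at $h_t$, identify the log-sum-exp Bregman divergence with $\frac1\eta\KL(y_t\|y_{t+1})$, and telescope. Your explicit derivation of that KL formula supplies the one step the paper only asserts.
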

The final nonnegative term has no continuous-time counterpart. Even in zero-sum games, it rewards movement of the learner's strategy and is the source of additional discrete-time path dependence.

\end{document}